\newif\if@restonecol
\newtheorem{theorem}{Theorem}%[section]
\newtheorem{lemma}{Lemma}%[section]
\newtheorem{definition}{Definition}%[section]
\newtheorem{example}{Example}%[section]
\newtheorem{fact}{Fact}%[section]
\newcommand{\qedsymb}{\hfill{\rule{2mm}{2mm}}}
\def\0{\phantom{0}}
\newcommand{\remove}[1]{}
\newbox\tallstrutbox
\def\tallstrut{\relax\ifmmode\copy\tallstutbox\else\unhcopy\tallstrutbox\fi}
\newbox\tallerstrutbox
\def\tallerstrut{\relax\ifmmode\copy\tallerstutbox\else\unhcopy\tallerstrutbox\fi}
\newcommand*\circled[1]{\tikz[baseline=(char.base)]{
  \node[shape=circle,draw,inner sep=0.7pt] (char) {#1};}}
\DeclareMathOperator{\sus}{\mathit{SUS}}
\DeclareMathOperator{\sa}{\mathit{SA}}
\DeclareMathOperator{\rank}{\mathit{Rank}}
\DeclareMathOperator{\lcp}{\mathit{LCP}}
\DeclareMathOperator{\lr}{\mathit{LR}}
\DeclareMathOperator{\llr}{\mathit{LLR}}
\DeclareMathOperator{\llrc}{\mathit{LLRc}}
\begin{document}

\title{On Stabbing Queries for Generalized Longest
  Repeat\thanks{A preliminary version of this work appeared as 
a regular paper in the Proceedings of 
IEEE International Conference on Bioinformatics and Biomedicine
(BIBM), November 9--12, 2015, Washington D.C., USA.}}

\author{Bojian Xu\\
Department of Computer Science, Eastern Washington University,
WA99004, USA.\\
\texttt{bojianxu@ewu.edu}
}

\maketitle

\begin{abstract}
%  \small\baselineskip=9pt 
  A longest repeat query on a string, motivated by its applications in
  many subfields including computational biology, asks for the longest
  repetitive substring(s) covering a particular string position (point
  query).  In this paper, we extend the longest repeat query from
  point query to \emph{interval query}, allowing the search for
  longest repeat(s) covering any position interval, and thus
  significantly improve the usability of the solution. Our method for
  interval query takes a different approach using the insight from a
  recent work on \emph{shortest unique
    substrings}~\cite{HPT-spire2014}, as the prior work's approach for
  point query becomes infeasible in the setting of interval query.
  Using the critical insight from~\cite{HPT-spire2014}, we propose an
  indexing structure, which can be constructed in the optimal $O(n)$
  time and space for a string of size $n$, such that any future
  interval query can be answered in $O(1)$ time. Further, our solution
  can find \emph{all} longest repeats covering any given interval
  using optimal $O(occ)$ time, where $occ$ is the number of longest
  repeats covering that given interval, whereas the prior $O(n)$-time
  and space work can find only one candidate for each point query.
  %We provide a generic
  %implementation of our solution without assuming the alphabet size,
  %making the software be able to process different types of strings.
  Experiments with real-world biological data show that
  our proposal is competitive with prior works,
  both time and space wise, 
  while providing with the 
  new functionality of interval queries as opposed to 
  point queries provided by prior works.

\remove{
  Longest repeat queries, motivated by its applications in many
  subfields such as computation biology, are to search for the longest
  repetitive substring(s) covering a particular position of a given
  string of size $n$. The recent study by \.{I}leri et al.\ can find
  the \emph{leftmost} longest repeat for every string position using 
   optimal $O(n)$ time and space.  One can also  modify the
  algorithm by Schnattinger et al.\ for bidirectional matching
  statistics computation to compute the \emph{rightmost} longest
  repeats using optimal $O(n)$ time and space.  Since
  there are only $n$ distinct string positions, by saving the
  pre-computed results using $O(n)$ space, they are able to answer any
  future query for the longest repeat covering a particular string
  position in $O(1)$ time. In this work, we generalize the longest
  repeat query so as to allow the search for the longest repeat(s)
  covering any \emph{interval} of string positions. Because there are
  a total of $\binom{n}{2}+n = \Theta(n^2)$ distinct intervals,
  it becomes impossible to achieve
  the $O(1)$ query response time by pre-computing and storing the
  longest repeats covering each of all possible $\Theta(n^2)$ intervals 
   but still using $O(n)$ time and space.  In this
  paper, we present an indexing structure, which can be constructed in
  $O(n)$ time and space, such that any future generalized longest
  repeat query covering any arbitrary interval can still be answered
  in $O(1)$ time. Further, our index is able to report \emph{all}
  longest repeats covering any given interval using optimal $O(occ)$
  time, where $occ$ is the number of longest repeats covering that
  given interval, whereas the work of \.{I}leri et al.\ and
  Schnattinger et al.\ can only find either the leftmost or the
  rightmost one covering a single position.
}
\end{abstract}

 \begin{keywords}
string, repeats, longest repeats, stabbing query
 \end{keywords}

\section{Introduction}
\label{sec:intro}
Repetitive structures and regularity finding in genomes and proteins
is important as these structures play important roles in the
biological functions of genomes and proteins~\cite{Gus97}.  One of the
well-known features of DNA is its repetitive structure, especially in
the genomes of eukaryotes. Examples are that overall about one-third
of the whole human genome consists of repeated
substrings~\cite{McC1993}; about 10--25\% of all known proteins have
some form of repetitive structures~\cite{LW06}. In addition, a number
of significant problems in molecular string analysis can be reduced
to repeat finding~\cite{Mar83}. Therefore, it is of great interest for
biologists to find such repeats in order to understand their
biological functions and solve other problems.

There has been an extensive body of work on repeat finding in the
communities of bioinformatics and stringology.  The notion of maximal
repeat and super maximal
repeat~\cite{Gus97,BDH09,KVX-tcbb2012,BBO-spire2012} captures all the
repeats of the whole string in a space-efficient manner. Maximal
repeat finding over multiple strings and its duality with minimum
unique substrings were also
understood~\cite{BIMSTTT2007,ISY2009,IS2011}. We refer readers
to~\cite{Gus97} (Section~7.11) for the discussion and further pointers
to other types of repetitive structures, such as palindrome and tandem
repeat. However, all these notions of repeats do not track the
locality of each repeat, and thus it is difficult for them to support
position-specific queries (stabbing queries) in an efficient manner.

Because of this reason, longest repeat query was recently proposed and
asks for the longest repetitive substring(s) that covers a
particular string
position~\cite{IKX-repeat-CORR2015,SOG2012-IC,TX-GPU-DASFAA2015}.
Because any substring of a repetitive substring is also
repetitive, longest repeat query effectively provides a ``stabbing''
tool for finding most of the repeats that cover any particular
string position.
The algorithm by Schnattinger et al.~\cite{SOG2012-IC} for computing
bidirectional matching statistics can be used to compute the
\emph{rightmost} longest repeat covering every string position,
whereas the study by \.{I}leri et al.~\cite{IKX-repeat-CORR2015} can
find the \emph{leftmost} longest repeat for every string position.
Both solutions use  optimal $O(n)$ time and space for
finding the longest repeat for all the $n$ string positions.  By
storing the pre-computed longest repeats of every position, they are
able to answer any future longest repeat query in $O(1)$ time, and
thus achieve the amortized $O(1)$ time cost in finding the longest
repeat of any arbitrary string position.
Since it is not clear how to parallelize the optimal algorithms
in~\cite{IKX-repeat-CORR2015,SOG2012-IC}, the recent
study in~\cite{TX-GPU-DASFAA2015} proposed a time sub-optimal but
parallelizable algorithm, so as to take advantage of the modern
multi-processor computing platforms such as the general-purpose
graphics processing units.
\remove{
Because there are only $n$ distinct string positions, by saving the
pre-computed results using $O(n)$ space, all three
works~\cite{IKX-repeat-CORR2015,SOG2012-IC,TX-GPU-DASFAA2015} are
able to answer any future query for the longest repeat covering any
particular string position in $O(1)$ time.
}

\section{Problem Statement}
\label{sec:prob}

We consider a {\bf string} $S[1.. n]$,
%\footnote{In this paper, we
%  use 1-based indexing arrays.} 
 where each character $S[i]$ is
drawn from an alphabet $\Sigma=\{1,2,\ldots, \sigma\}$.
A {\bf substring} $S[i.. j]$
of $S$ represents $S[i]S[i+1]\ldots S[j]$ if $1\leq i\leq j \leq n$,
and is an empty string if $i>j$.
String $S[i'.. j']$ is a {\bf proper substring} of another string
$S[i.. j]$ if $i\leq i' \leq j' \leq j$ and $j'-i' < j-i$. 
%
%A string is a (proper) {\bf superstring} of its (proper) substrings. 
%
The {\bf length} of a non-empty substring $S[i.. j]$, denoted as
$|S[i.. j]|$, is $j-i+1$. We define the length of an empty string
as zero. 
A {\bf prefix} of $S$ is a substring $S[1.. i]$
for some $i$, $1\leq i\leq n$. 
A {\bf proper prefix} $S[1.. i]$ is a prefix of $S$ where $i <
n$.
A {\bf suffix} of $S$ is a substring
$S[i.. n]$ for some $i$, $1\leq i\leq n$.  
A {\bf proper suffix} $S[i.. n]$ is a suffix of $S$ where $i >
1$.
We say character $S[i]$ occupies the string {\bf position} $i$.
We say substring $S[i.. j]$ {\bf covers} the position interval
$[x.. y]$ of
$S$, if $i\leq x \leq y  \leq j$.  
In the case $x=y$,  we say 
 substring $S[i.. j]$ {\bf covers} the  position $x$ (or $y$) of string
$S$.  
For two strings $A$ and $B$, we write ${\bf A=B}$ (and say $A$ is {\bf
  equal} to $B$), if $|A|= |B|$ and $A[i]=B[i]$ for 
$i=1,2,\ldots, |A|$.  
%We write ${\bf A\subseteq B}$ if $A$ is equal to
%a substring of $B$, and ${\bf A\subset B}$ if $A$ is equal to 
%a proper substring of $B$.
%
We say $A$ is lexicographically smaller than $B$,
denoted as ${\bf A < B}$, if (1) $A$ is a proper prefix of $B$, or (2)
$A[1] < B[1]$, or (3) there exists an integer $k > 1$ such that
$A[i]=B[i]$ for all $1\leq i \leq k-1$ but $A[k] < B[k]$.
A substring
$S[i.. j]$ of $S$ is {\bf unique}, if there does not exist
another substring $S[i'.. j']$ of $S$, such that 
$S[i.. j] = S[i'.. j']$ but $i\neq i'$. 
A character $S[i]$ is a {\bf singleton}, if it is unique.
A substring is a {\bf repeat} if it is not unique.
%To ease presentation, we define an empty substring is a repeat. 

\begin{definition}
\label{def:lr}
A {\bf longest repeat (LR)} covering string position interval 
  $[x.. y]$, denoted
as $\lr_x^y$, is 
a repeat substring $S[i.. j]$, such that: (1) $i\leq x\leq y \leq j$, and 
(2) there does not exist  another repeat substring $S[i'.. j']$, such
that $i'\leq x \leq y \leq j'$ and $j'-i' > j-i$. 
\end{definition}

Obviously, for any string position interval $[x.. y]$, if
$S[x.. y]$ is not unique, $\lr_x^y$ must exist, because at least
$S[x.. y]$ itself is a repeat.  Further, there might be multiple
\emph{choices} for $\lr_x^y$. For example, if $S={\tt abcabcddbca}$, then
$\lr_2^3$ can be either $S[1.. 3]={\tt abc}$ or $S[2.. 4]={\tt
  bca}$.

\noindent{\bf Problem} (generalized stabbing LR query).
  Given a string position interval $[x.. y]$, $1\leq x \leq y \leq
  n$, find all choices of $\lr_x^y$ or the fact that it
  does not exist.

We call the generalized stabbing LR query as \emph{interval query},
which includes the \emph{point query} as a special case where
$x=y$. All prior works~\cite{SOG2012-IC,IKX-repeat-CORR2015,TX-GPU-DASFAA2015} only
studied point query.  {\em Our goal is to find an efficient mechanism for
finding the longest repeats of every possible string position
interval.}

\section{Prior Work and Our contribution}
\label{sec:prior}
In addition to the related work discussed in Section~\ref{sec:intro},
there were recently a sequence of work on finding \emph{shortest
  unique substrings}
(SUS)~\cite{PWY-ICDE2013,TIBT2014,IKX-CPM2014,ikx-sus-tcs2015,HPT-spire2014},
of which Hu et al.~\cite{HPT-spire2014} studied the generalized
version of SUS finding: 
\emph{Given a string position interval $[x.. y]$, $1\leq x \leq y \leq
  n$, find $\sus_x^y$, the shortest unique substring that
  covers the string position interval $[x..y]$, or the fact that such $\sus_x^y$
  does not exist.
}

To the best of our knowledge, no efficient reduction from LR
finding to SUS finding is known as of now. That is, given a set of SUSes covering a set of
position intervals respectively, it is not
clear how to find the set of LRs that cover that same set of
position intervals respectively, by only using 
the string $S$, the given set of SUSes, and 
linear (of the set
size) time cost for the reduction. The reason 
behind the hardness of obtaining such an efficient reduction is because simply
chopping off one ending character of an SUS does not necessarily produce an LR. 

For example: suppose $S={\tt a..aba..a}$ of $2n+1$ characters,
where every character is $\texttt{a}$ except the middle one is ${\tt
  b}$. Clearly, $\sus_{n-1}^n=S[n-1,n+1] = {\tt aab}$, whereas
$\lr_{n-1}^n=S[1..n]$. Given $\sus_{n-1}^n$ and $S$ itself, it is
not clear how to find $\lr_{n-1}^n=S[1..n]$ using $O(1)$ time, without involving
other auxiliary data structures (otherwise, the reduction, which is
still unknown, can become so
complex, making itself no better than a self-contained solution for finding
LR, which is what this 
paper is presenting.).

Due to the overall importance of repeat finding in bioinformatics and
the lack of efficient reduction from SUS finding to LR finding, it is
our belief that providing and implementing a complete solution for
generalized LR finding will be beneficial to the community. In
summary, we make the following contributions.

\noindent
1. We generalize the longest repeat query from point query to
  \emph{interval query}, allowing the search for the longest repeat(s)
  covering any interval of string positions, and thus
  significantly improve the usability of the solution.

\noindent
2. Because there are at most $n$ point queries for a string of size
$n$, all prior works pre-compute and save the results of every
possible point query, such that any future point query can be answered
in $O(1)$ time.  However, in the setting of interval queries, there
are $\binom{n}{2}+n = \Theta(n^2)$ distinct intervals.  It becomes
impossible, under the $O(n)$ time and space budget, to achieve the
amortized $O(1)$ query response time, by pre-computing and storing the
longest repeats covering each of the $\Theta(n^2)$ intervals.
Therefore, a different approach is needed. Our approach uses the insight
from the work by HU et al.~\cite{HPT-spire2014} that leads us to 
an indexing structure, which can
be constructed using optimal $O(n)$ time and space, such that, by
using this indexing structure, any future interval query can still be
answered in $O(1)$ time. The $O(n)$ time and space costs are optimal
because reading and saving the input string already needs $O(n)$ time
and space.

\noindent
3. Our work can find all longest repeats covering any given
  interval using  optimal $O(occ)$ time, where $occ$ is the number
  of the longest repeats covering that interval.  However, the work
  in~\cite{IKX-repeat-CORR2015} and~\cite{SOG2012-IC} can only find
  the leftmost and the rightmost candidate, respectively, and only
  support point queries.  The algorithm in~\cite{TX-GPU-DASFAA2015} can
  find all longest repeats covering a string position, but
  their parallelizable sequential algorithm is sub-optimal in the time
  cost ($O(n^2)$, indeed) and only supports point queries as well.

\noindent
4. We provide a generic implementation of our solution without
assuming the alphabet size, making the software useful for the
analysis of different
types of strings.  Experimental study with real-world biological
data shows that our proposal is competitive with prior works, both
time and space wise, while supporting interval
queries in the meantime.

\section{Preparation}
\label{sec:prep}
The {\bf suffix array} $\sa[1.. n]$ of the string $S$ is a
permutation of $\{1,2,\ldots, n\}$, such that for any $i$ and $j$,
$1\leq i < j \leq n$, we have $S[\sa[i].. n] < S[\sa[j].. n]$.
That is, $\sa[i]$ is the start position of the $i$th suffix in
the sorted order of all the suffixes of $S$.
The {\bf rank array} $\rank[1.. n]$ is the inverse of the suffix
array. That is, $\rank[i]=j$ iff $\sa[j]=i$. 
The {\bf longest common prefix (lcp) array} $\lcp[1.. n+1]$ is an
array of $n+1$ integers, such that for $i=2,3,\ldots, n$, $\lcp[i]$ is
the length of the lcp of the two suffixes $S[\sa[i-1].. n]$ and
$S[\sa[i].. n]$. We set $\lcp[1]=\lcp[n+1]=0$.\footnote{In literature,
the lcp array is often defined as an array of $n$ integers. We include
an extra zero at $\lcp[n+1]$ as a sentinel  to simplify the description 
of our upcoming
algorithms.}  
%Table~\ref{tab:suflcp} 
%in the appendix 
The following table
shows the suffix array and the lcp
array of an example string $S={\tt mississippi}$.
%

%\begin{table}[h!]
\begin{center}
\def\0{\phantom{0}}
{\footnotesize
\begin{tabular}{c|c|c|l}
\hline 
\tallstrut$i$ & $\lcp[i]$  & $\mathit{\sa}[i]$ & suffixes\\
\hline
\hline
%\strut&&&&\\
\tallstrut$\01$ & $0$ & $11\0$  &{\tt i}\\
$\02$ & $1$ & $\08\0$  & {\tt  ippi}\\
$\03$ & $1$ & $\05\0$  & {\tt  issippi}\\
$\04$ & $4$ & $\02\0$  & {\tt  ississippi}\\
$\05$ & $0$ & $\01\0$  & {\tt  mississippi}\\
$\06$ & $0$ & $10\0$  & {\tt  pi}\\
$\07$ & $1$ &  $\09\0$  & {\tt ppi}\\
$\08$ & $0$ & $\07\0$  & {\tt sippi}\\
$\09$ & $2$  & $\04\0$  & {\tt sissippi}\\
$10$ & $1$  & $\06\0$  & {\tt ssippi}\\
$11$ & $3$ & $\03\0$  & {\tt ssissippi}\\
$12$ & $0$ & -- & --\\
\hline
\end{tabular}
}
\end{center}
%\bigskip
%\caption{The suffix array and the lcp array of an example string $S={\tt mississippi}$.}
%\label{tab:suflcp}
%\end{table}

\begin{definition}
\label{def:llr}
The
{\bf left-bounded longest repeat (LLR)} starting at position $k$,
denoted as $\llr_k$, is a repeat $S[k.. j]$,
such that either $j=n$ or $S[k.. j+1]$ is unique. 
\end{definition}

Clearly, for any string position $k$, if $S[k]$ is not a singleton,
$\llr_k$ must exist, because at least $S[k]$ itself is a repeat.
Further, if $\llr_k$ does exist, it must have only one choice, because
$k$ is a fixed string position and the length of $\llr_k$ must be as
long as possible.

Lemma~\ref{lem:llr} shows that, by using the rank array and
the lcp array of the string $S$, it is easy to calculate any $\llr_i$ if
it exists or to detect the fact that it does not exist.

\begin{lemma}[\cite{IKX-repeat-CORR2015}]
\label{lem:llr}
For $i=1,2,\ldots,n$: 
$$
\llr_i = 
\left \{
\begin{array}{ll}
S[i.. i + L_i-1], & \textrm{if $L_i > 0$}\\
\textit{does not exist}, & \textrm{if $L_i = 0$}
\end{array}
\right.
$$
where $L_i = \max\{\lcp[\rank[i]],\lcp[\rank[i]+1]\}$.
\end{lemma}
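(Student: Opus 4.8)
The plan is to establish the two cases of the lemma by connecting the definition of $\llr_i$ to the structure of the suffix array. The key observation is that $\llr_i = S[i.. j]$ is a repeat precisely when $S[i.. j]$ occurs at least twice in $S$, which is equivalent to saying that $S[i.. j]$ is a common prefix of two distinct suffixes of $S$. Since the suffixes sharing the longest common prefix with $S[i.. n]$ are exactly its immediate neighbors in sorted order, I would first argue that the length of the longest repeated prefix of the suffix $S[i.. n]$ equals $L_i = \max\{\lcp[\rank[i]], \lcp[\rank[i]+1]\}$.

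First I would show $L_i$ is an upper bound on the length of any repeated prefix of $S[i.. n]$. Suppose $S[i.. j]$ is a repeat, so it equals some $S[i'.. j']$ with $i \neq i'$; then $S[i.. n]$ and $S[i'.. n]$ share a common prefix of length at least $j-i+1$. Letting $r = \rank[i]$, the suffix $S[i'.. n]$ sits at some position $r' \neq r$ in the suffix array, and the longest common prefix of any two suffixes at sorted positions $r$ and $r'$ is $\min$ over the intervening $\lcp$ values, which is at most $\max\{\lcp[r], \lcp[r+1]\} = L_i$ (the lcp with the closer of the two neighbors dominates). Hence $j - i + 1 \leq L_i$. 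Conversely, I would show $L_i$ is achievable: if $L_i > 0$, then one of the two adjacent suffixes in sorted order shares a prefix of length exactly $L_i$ with $S[i.. n]$, which exhibits $S[i.. i+L_i-1]$ as a repeat. This matches $S[k.. j]$ with $j = i + L_i - 1$, and by construction $S[i.. i+L_i]$ (if it exists) is unique since no suffix agrees with $S[i.. n]$ beyond length $L_i$; thus the defining condition of $\llr_i$ is met.

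For the second case, if $L_i = 0$ then neither adjacent suffix shares even a single character with $S[i.. n]$, so no other suffix does either, meaning $S[i]$ is a singleton and no repeat can start at position $i$; hence $\llr_i$ does not exist, exactly as claimed.

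The main obstacle I anticipate is the clean justification that the longest common prefix between the suffix at rank $r$ and \emph{any} other suffix is maximized by one of the two immediate neighbors, and that this maximum equals $L_i = \max\{\lcp[r], \lcp[r+1]\}$. This relies on the standard fact that for $r < r'$ the lcp of suffixes $\sa[r]$ and $\sa[r']$ equals $\min_{r < t \leq r'} \lcp[t]$, which is monotonically non-increasing as $r'$ moves away from $r$ in either direction; care must be taken at the boundary ranks $r = 1$ and $r = n$, where the sentinel values $\lcp[1] = \lcp[n+1] = 0$ ensure the formula still returns the correct value. Once this structural fact is in hand, both cases follow directly.
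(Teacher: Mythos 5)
Your proof is correct and follows essentially the same route as the paper's own (elided) argument, which simply asserts that $L_i$ is the length of the longest common prefix between $S[i..n]$ and any other suffix and reads off the two cases from there. You additionally justify that assertion via the standard min-of-intervening-$\lcp$ fact and check the boundary sentinels, which the paper leaves implicit since the lemma is quoted from prior work.
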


\remove{
\begin{proof}
  Note that $L_i$ is the length of the lcp between the suffix
  $S[i.. n]$ and any other suffix of $S$.  If $L_i > 0$, 
  $S[i.. L_i-1]$ is the lcp among $S[i.. n]$ and any
  other suffix of $S$, so $S[i.. L_i-1]$ is $\llr_i$.  Otherwise
  ($L_i = 0$), the letter $S[i]$ is a singleton, so $\llr_i$ does not
  exist.
\end{proof}
}

Observe that an LLR can be a substring (proper suffix, indeed) of
another LLR. For example, suppose $S={\tt ababab}$, then
$\llr_4=S[4.. 6]={\tt bab}$, which is a substring of
$\llr_3=S[3.. 6] = {\tt abab}$.  Formally, the neighboring
LLRs have the following relationship.

\begin{lemma}[\cite{TX-GPU-DASFAA2015}]
\label{lem:llr-length}
$|\llr_i|\leq |\llr_{i+1}|+1$
\end{lemma}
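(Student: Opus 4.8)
The plan is to argue directly from the definition of an LLR, using a simple character-shift observation, rather than going through the rank/lcp characterization of Lemma~\ref{lem:llr}. First I would dispose of the degenerate cases. If $\llr_i$ does not exist we read $|\llr_i|$ as $0$ and the inequality holds trivially. If $\llr_i = S[i.. i]$ has length $1$, then $|\llr_{i+1}|+1 \geq 0+1 = 1 = |\llr_i|$ holds regardless of whether $\llr_{i+1}$ exists. So the only substantive case is $\llr_i = S[i.. j]$ with $j \geq i+1$, i.e.\ $|\llr_i| \geq 2$, and I would assume this from here on.

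The key step is the shift argument. Since $\llr_i = S[i.. j]$ is a repeat, there is a position $i' \neq i$ with $S[i'.. i'+(j-i)] = S[i.. j]$ and $i'+(j-i) \leq n$. Dropping the leading character of both occurrences gives $S[i+1.. j] = S[i'+1.. i'+(j-i)]$. Because $j-i \geq 1$, the right-hand substring is non-empty, and the two start positions $i+1$ and $i'+1$ remain distinct (as $i \neq i'$). Hence $S[i+1.. j]$ is a non-empty repeat starting at position $i+1$, of length $|\llr_i|-1$.

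Finally, since $\llr_{i+1}$ is by definition the longest repeat that starts at position $i+1$, and $S[i+1.. j]$ is one such repeat, I conclude $|\llr_{i+1}| \geq |\llr_i|-1$, which is exactly $|\llr_i| \leq |\llr_{i+1}|+1$.

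I do not expect any deep obstacle here; the only care needed is bookkeeping. The two points to verify are that the shifted interval $[i'+1.. i'+(j-i)]$ stays within $[1.. n]$ (guaranteed because the original occurrence already did) and that the length-one case is not allowed to sneak in an empty shifted substring (handled by isolating $|\llr_i| \geq 2$ up front). One could instead try to derive the bound from Lemma~\ref{lem:llr} by comparing $L_i$ and $L_{i+1}$ through monotonicity properties of the lcp array along the suffix order, but that route requires reasoning about the relative positions of $\rank[i]$ and $\rank[i+1]$ in the suffix array; the direct combinatorial shift argument above sidesteps all of that and is the cleaner path.
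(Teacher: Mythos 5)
Your proof is correct and follows essentially the same route as the paper's: dispose of the cases $|\llr_i|\leq 1$, then observe that $S[i+1.. j]$ is a repeat of length $|\llr_i|-1$ starting at $i+1$, so $|\llr_{i+1}|\geq|\llr_i|-1$. The only difference is that you spell out, via the explicit shift of the second occurrence, why a substring of a repeat is again a repeat, a fact the paper invokes without elaboration.
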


\remove{
\begin{proof}
 The claim is clearly correct for the case where 
  $|\llr_i|$ is $0$ or $1$, so we only consider the case where
  $|\llr_i| \geq 2$. Suppose $\llr_i = S[i.. j]$, $i < j$. It
  follows that $i+1 \leq j$. Since $S[i.. j]$ is a repeat, its
  substring $S[i+1.. j]$ is also a repeat. Note that $\llr_{i+1}$
  is the longest repetitive substring starting from  position $i+1$, so
  $|\llr_{i+1}| \geq |S[i+1.. j]| = |\llr_i|-1$, i.e., 
$|\llr_i|\leq |\llr_{i+1}|+1$.    
\end{proof}
}

\begin{definition}
\label{def:useful}
We say an LLR is \emph{useless} if it is
a substring of another LLR; otherwise, it is \emph{useful}.
\end{definition}

%The following lemma lays the foundation of our
%solution for the generalized stabbing LR query.

\begin{lemma}
\label{lem:lr-llr}
Any existing longest repeat $\lr_x^y$, $1\leq x
\leq y \leq n$, must be a useful LLR.
\end{lemma}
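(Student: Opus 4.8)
The plan is to fix an arbitrary existing $\lr_x^y$, write it as $S[i..j]$ (so that $i \le x \le y \le j$ and $S[i..j]$ is a repeat), and establish the claim in two stages: first that $S[i..j]$ is precisely the left-bounded longest repeat $\llr_i$, and second that this $\llr_i$ is useful in the sense of Definition~\ref{def:useful}. I note at the outset that $\llr_i$ exists, since position $i$ lies inside the repeat $S[i..j]$ and hence $S[i]$ is not a singleton.

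For the first stage I would invoke maximality in both directions. On one hand, $S[i..j]$ is a repeat whose left end is at $i$, so by the definition of $\llr_i$ as the \emph{longest} such repeat we have $|\llr_i| \ge |S[i..j]|$, and since both share the left end $i$, the string $S[i..j]$ is a prefix of $\llr_i$; write $\llr_i = S[i..j^\ast]$ with $j^\ast \ge j$. On the other hand, were $j^\ast > j$, the repeat $\llr_i = S[i..j^\ast]$ would still satisfy $i \le x \le y \le j < j^\ast$ and thus cover $[x..y]$, while strictly exceeding $S[i..j]$ in length --- contradicting that $\lr_x^y = S[i..j]$ is a longest repeat over $[x..y]$. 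Hence $j^\ast = j$ and $S[i..j] = \llr_i$.

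For the second stage I would argue by contradiction. Suppose $\llr_i = S[i..j]$ is useless, i.e.\ a proper substring of some distinct $\llr_{i'} = S[i'..j'']$. Since every LLR is left-bounded at its own start position and $i' \ne i$, positional containment forces $i' < i$ and $j \le j''$; consequently $\llr_{i'}$ covers $[x..y]$ because $i' < i \le x \le y \le j \le j''$, and it is strictly longer than $S[i..j]$ --- again contradicting the maximality of $\lr_x^y$. Therefore $\llr_i$ is useful, finishing the proof. Both stages run on the same engine: any repeat that strictly extends $S[i..j]$ on the side away from $[x..y]$ still stabs $[x..y]$ and exceeds it in length. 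The one place I expect to tread carefully is the containment bookkeeping in the second stage --- verifying that a distinct LLR that contains $\llr_i$ must begin strictly to the left of $i$ (and hence still covers $x$), which is exactly where the positional reading of ``substring'' fixed by the paper's definitions is needed.
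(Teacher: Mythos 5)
Your proof is correct and follows essentially the same two-step argument as the paper: first that $\lr_x^y$ cannot be extended to the right (hence equals $\llr_i$), then that a useless LLR would be contained in a strictly longer LLR that still covers $[x..y]$, contradicting maximality. You merely spell out the positional bookkeeping (the prefix relation in stage one, the $i'<i$, $j\le j''$ containment in stage two) that the paper leaves implicit.
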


\begin{proof}
(1) We first prove $\lr_x^y$ must be an LLR.
Assume that $\lr_x^y=S[i.. j]$ is not an LLR. Note that $S[i..
j]$ is a repeat starting from position $i$. If $S[i.. j]$ is not
an LLR, it means $S[i.. j]$ can be extended to some position
$j' > j$, so that $S[i.. j']$ is still a repeat and  also covers
the position interval $[x.. y]$. That says, $|S[i.. j']| > |S[i.. j]|$.
However, the contradiction is that $S[i.. j]$ is already the longest repeat
covering the position interval $[x.. y]$. (2) Further, $\lr_x^y$ must
be a useful LLR, because if it is a useless LLR, it means there exists
another LLR that covers the position interval $[x.. y]$ but is
longer than
$\lr_x^y$, which contradicts the fact that $\lr_x^y$ is
the longest repeat that covers the interval $[x.. y]$.
\end{proof}

\section{LR finding for one interval }
In this section, we propose an algorithm that takes as input a string position
interval and returns the LR(s) covering that interval. The algorithm
spends $O(n)$ time and space per query but does not need any indexing
data structure. We present this algorithm here in case the
practitioners have only a small number of interval queries of their
interest and thus this light-weighted algorithm will suffice.  We
start with the finding of the leftmost LR covering the given
interval and will give a trivial extension in the end for finding all LRs
covering the given interval.

\begin{lemma}
\label{lem:llr-cover}
For any $i$, $j$, $x$, and $y$, $1\leq i < j \leq x \leq y \leq n$:
If $\llr_j$ does not exist
or exists but does not cover the interval $[x.. y]$,
$\llr_i$  does not exist or does not cover $[x.. y]$
\end{lemma}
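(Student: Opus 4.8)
The plan is to prove the contrapositive: \emph{if $\llr_i$ exists and covers $[x.. y]$, then $\llr_j$ exists and covers $[x.. y]$ as well.} The first simplification I would make is to record that, since $i < j \leq x$, the left endpoint of either LLR is automatically at most $x$ (the left endpoint of $\llr_i$ is $i < x$ and that of $\llr_j$ is $j \leq x$). Hence, writing $\llr_k = S[k.. m_k]$ for its (unique, by the remark after Definition~\ref{def:llr}) right endpoint $m_k$ whenever it exists, the statement ``$\llr_k$ covers $[x.. y]$'' collapses to the single condition $m_k \geq y$. This turns the whole lemma into a comparison of how far the two LLRs reach to the right.

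Next I would unfold the hypothesis of the contrapositive. Assuming $\llr_i = S[i.. m_i]$ exists and covers $[x.. y]$, the inequality chain $m_i \geq y \geq x \geq j > i$ gives $i < j \leq m_i$, so $\llr_i$ is a repeat of length at least two that physically spans position $j$. The key step is to pass from the repeat $S[i.. m_i]$ to its suffix $S[j.. m_i]$: since $S[i.. m_i]$ occurs at least twice at distinct positions, each occurrence contains an occurrence of $S[j.. m_i]$ at the fixed offset $j-i$, so $S[j.. m_i]$ also occurs at least twice and is itself a repeat. This is exactly the substring-of-a-repeat argument that underlies Lemma~\ref{lem:llr-length}, so I would either invoke that lemma or reproduce this single line. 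Having exhibited a repeat starting at position $j$, the conclusion is immediate: $\llr_j$ must exist, and because it is by definition the \emph{longest} repeat starting at $j$, it reaches at least as far as $S[j.. m_i]$, i.e. $m_j \geq m_i \geq y$. Thus $\llr_j$ covers $[x.. y]$, establishing the contrapositive.

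The two ``does not exist'' disjuncts in the statement require no separate treatment: a nonexistent $\llr_i$ trivially fails to cover $[x.. y]$, while the argument above simultaneously yields both that $\llr_j$ exists and that it reaches past $y$. The only place demanding genuine care — the main obstacle — is the claim that the suffix $S[j.. m_i]$ is a \emph{bona fide} repeat, namely that the two induced occurrences sit at \emph{distinct} positions; this holds precisely because the two occurrences of the longer string $S[i.. m_i]$ are distinct and the offset $j-i$ is the same for both. Once that point is pinned down, everything else is bookkeeping with the chain $i < j \leq x \leq y \leq m_i \leq m_j$.
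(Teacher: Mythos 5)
Your proof is correct and is essentially the paper's own argument in contrapositive form: both hinge on the observation that the suffix $S[j..\,k]$ of the repeat $\llr_i = S[i..\,k]$ is itself a repeat covering $[x..\,y]$, which forces $\llr_j$ to exist and to reach at least to position $k \geq y$. The paper phrases this as a contradiction in two cases (nonexistence and non-coverage of $\llr_j$) rather than as a single contrapositive, but the content is identical.
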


\begin{proof}
  We prove the lemma by contradiction.  (1) Assume it is possible that
  when $\llr_j$ does not cover the interval $[x.. y]$, $\llr_i$
  can still cover $[x.. y]$.  Say, $\llr_i = S[i.. k]$ for
  some $k\geq y$. It follows that $S[j.. k]$ is also a repeat and
  covers $[x.. y]$, which is a contradiction, because 
  $\llr_j$, the longest repeat starting from string location $j$, does
  not cover $[x.. y]$.
  (2) Assume it is possible that when $\llr_j$ does not exist,
  $\llr_i$ can still cover $[x.. y]$.  Say, $\llr_i = S[i..
  k]$ for some $k\geq y$. It follows that $S[j.. k]$ is also a
  repeat and covers $[x.. y]$, which is a contradiction, because
$\llr_j$ does not exist at all, i.e., $S[j]$ is a
  singleton.    
\end{proof}

By Lemma~\ref{lem:lr-llr}, we know any LR must be an LLR, so we can
find $\lr_x^y$ covering a given interval $[x.. y]$ by simply
checking each $\llr_i$, $i\leq x$, and picking the longest one
that covers the interval $[x.. y]$. Ties are resolved
by picking the leftmost choice.  Because of Lemma~\ref{lem:llr-cover},
early stop is possible to make the procedure faster in practice by
checking every $\llr_i$ in the decreasing order of the value of $i=x, x-1, \ldots,
1$: the search will  stop whenever we see an  
$\llr_i$ that does not cover the interval $[x.. y]$ or does not exist at all. 
 Algorithm~\ref{algo:one-leftmost} shows the pseudocode, which
returns $(start,end)$, representing the start and ending
positions of $\lr_x^y$, respectively.  If $\lr_x^y$ 
does not exist, $(-1,-1) $ is returned.

\begin{algorithm}[t]
{\scriptsize
  \caption{\footnotesize Find the leftmost $\lr_x^y$ covering a given string position
    interval \textrm{$[x.. y]$}.}
\label{algo:one-leftmost}
\KwIn{(1) Two integers $x$ and $y$, $1\leq x \leq y \leq n$, representing 
a string position interval $[x.. y]$. \newline (2) The rank array and 
      the lcp array of the string $S$.} 
\KwOut{The leftmost $\lr_x^y$ or the fact that $\lr_x^y$ does not exist.}

\smallskip 

$start \leftarrow -1$; 
$end \leftarrow -1$; 
\tcp*{start and end positions of $\lr_x^y$}

\For{$i = x$ down to $1$\label{line:for}}{
  $L \leftarrow \max\{\lcp[\rank[i]],\lcp[\rank[i]+1]\}$\tcp*{$|\llr_i|$}

  \lIf%(\tcp*[f]{$\llr_i$ does not exist or does not cover $[x.. y]$.})
     {$L=0$ or $i+L-1<y$}{break\tcp*{Early stop}\label{line:break}}
  \ElseIf(\tcp*[f]{Pick the leftmost one})%(\tcp*[f]{Tie is resolved by picking the leftmost one.})
{$L\geq end-start+1$\label{line:tie}}
{$start\leftarrow i$; $end\leftarrow i+L-1$\label{line:tie2}
}
}
\Return{$\lr_x^y\leftarrow (start,length)$\;}
}%\small
\end{algorithm}

\begin{lemma}
\label{lem:one-leftmost}
Given the rank array and the lcp array of the string $S$, for any
string position interval $[x.. y]$,
Algorithm~\ref{algo:one-leftmost} can find $\lr_x^y$ or the fact that
it does not exist, using $O(x)$ time and $O(n)$ space.  If there are
multiple choices for $\lr_x^y$, the leftmost one is returned.
 \end{lemma}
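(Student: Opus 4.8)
The plan is to prove correctness and the running-time/space bounds separately, since the lemma bundles three claims: (i) the algorithm finds $\lr_x^y$ when it exists, (ii) it reports nonexistence correctly, and (iii) it runs in $O(x)$ time and $O(n)$ space, returning the leftmost choice on ties.

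**The plan is to** separate the argument into a correctness part and a complexity part. For correctness I would first pin down the search space. By Lemma~\ref{lem:lr-llr}, every $\lr_x^y$ is a useful LLR; and since an LLR $\llr_i = S[i.. j]$ covers $[x.. y]$ only when $i \leq x \leq y \leq j$, the only candidates are the $\llr_i$ with $i \leq x$ that cover $[x.. y]$. Moreover each such covering LLR is itself a repeat covering $[x.. y]$, hence has length at most $|\lr_x^y|$, while $\lr_x^y$ is one such LLR; therefore the maximum-length covering LLR has length exactly $|\lr_x^y|$, and any maximizer is a legitimate choice of $\lr_x^y$. It thus suffices to show that Algorithm~\ref{algo:one-leftmost} returns a longest LLR covering $[x.. y]$ and picks the leftmost one.

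Next I would check the loop mechanics: line~\ref{line:for} computes $L = |\llr_i|$ correctly via Lemma~\ref{lem:llr}, and the test on line~\ref{line:break} detects exactly ``$\llr_i$ does not exist ($L=0$) or does not cover $[x.. y]$ ($i+L-1<y$)''. The one step needing real justification is the \emph{early stop}: when the loop breaks at some position $i$, no smaller starting position can yield a covering LLR. This follows directly from Lemma~\ref{lem:llr-cover}, because the break position satisfies $i \leq x \leq y$, so for every $i' < i$ we have $i' < i \leq x \leq y$ and the lemma forces $\llr_{i'}$ to be absent or non-covering. Consequently the positions the algorithm inspects form a maximal contiguous run $x, x-1, \ldots, i^\ast$ of covering LLRs that contains \emph{all} LLRs covering $[x.. y]$, so nothing is missed.

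For the selection, the guard ``$L \geq end-start+1$'' on line~\ref{line:tie} keeps the recorded length non-decreasing and therefore tracks the maximum length over the inspected run. Since $i$ is scanned in \emph{decreasing} order and the guard uses ``$\geq$'', among several covering LLRs of the maximal length the leftmost (smallest $i$) is seen last and overwrites the earlier ones, so the returned $(start,end)$ is the leftmost longest covering LLR. For nonexistence, if no covering LLR exists then the loop breaks already at $i=x$, leaving $start=end=-1$ and returning $(-1,-1)$; this is correct because by Lemma~\ref{lem:lr-llr} a missing covering LLR means a missing LR.

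Finally the complexity is immediate: the loop runs over $i=x, x-1, \ldots$ for at most $x$ iterations, each doing only constant-time array accesses and comparisons, giving $O(x)$ time; the only retained data are the rank and lcp arrays of size $O(n)$ plus a constant number of scalars, giving $O(n)$ space. I expect the \emph{early-stop} justification to be the main obstacle, since its validity rests entirely on the monotonicity of Lemma~\ref{lem:llr-cover} and would fail without the hypothesis $i' < i \leq x$; the leftmost tie-breaking is routine but still needs the scan direction and the ``$\geq$'' guard to be stated precisely.
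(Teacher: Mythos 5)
Your proof is correct and takes essentially the same route as the paper: the paper's formal proof only records the $O(x)$ iteration count, the $O(n)$ space from the rank and lcp arrays, and the leftmost tie-breaking via the ``$\geq$'' guard, while delegating the correctness of the candidate set and the early stop to the surrounding discussion of Lemma~\ref{lem:lr-llr} and Lemma~\ref{lem:llr-cover} --- exactly the two lemmas you invoke. You simply spell out in full what the paper states informally, so there is nothing to correct.
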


 \begin{proof}
   The algorithm clearly has no more than $x$ iterations and each iteration
   takes $O(1)$ time, so it costs  $O(x)$ time. The space
   cost is primarily from the rank array and the lcp array, which
   altogether is $O(n)$, assuming each integer in these arrays costs a
   constant number of memory words.  
  If multiple LRs cover position interval $[x.. y]$, the leftmost
  LR will be returned, as is guaranteed by Line~\ref{line:tie} of
  Algorithm~\ref{algo:one-leftmost}. 
 \end{proof}

\begin{theorem}
\label{thm:one-leftmost}
For any position interval $[x.. y]$ in the string $S$, we
can find $\lr_x^y$ or the fact that it does not exist using $O(n)$ time
and space.  If there are multiple choices for $\lr_x^y$, the leftmost
one is returned.
\end{theorem}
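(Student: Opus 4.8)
The plan is to observe that the statement is essentially Lemma~\ref{lem:one-leftmost} augmented with the preprocessing needed to supply its inputs. Lemma~\ref{lem:one-leftmost} already guarantees that, once the rank array and the lcp array of $S$ are available, Algorithm~\ref{algo:one-leftmost} returns the leftmost $\lr_x^y$ (or reports its non-existence) in $O(x)$ time and $O(n)$ space; since $x\leq n$, the time bound falls within $O(n)$. Thus it suffices to show that $\rank$ and $\lcp$ can themselves be built from $S$ within the same $O(n)$ time and space budget.

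First I would construct the suffix array $\sa[1..n]$ of $S$ using any of the classical linear-time suffix array construction algorithms, which run in $O(n)$ time and $O(n)$ space for an integer alphabet $\Sigma=\{1,\ldots,\sigma\}$ with $\sigma=O(n)$. Second, since the rank array is by definition the inverse permutation of the suffix array ($\rank[i]=j$ iff $\sa[j]=i$), it can be obtained by a single linear scan of $\sa$ in $O(n)$ time and space. Third, I would compute the lcp array $\lcp[1..n+1]$ in $O(n)$ time from $S$, $\sa$, and $\rank$ using the standard linear-time lcp construction (e.g.\ Kasai et al.'s algorithm), and set the two sentinels $\lcp[1]=\lcp[n+1]=0$ exactly as in Section~\ref{sec:prep}.

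With these three arrays in hand, I would invoke Algorithm~\ref{algo:one-leftmost} on the query interval $[x..y]$. By Lemma~\ref{lem:one-leftmost}, this correctly returns the leftmost choice of $\lr_x^y$ when one exists and returns $(-1,-1)$ otherwise, using $O(x)=O(n)$ time and $O(n)$ space. Adding the preprocessing cost, the total remains $O(n)$ time and $O(n)$ space, which establishes the theorem.

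There is no genuinely hard step here: the only non-routine ingredients are the linear-time constructions of the suffix array and the lcp array, both of which are classical and can be invoked as black boxes. The single point requiring care is matching the lcp indexing convention adopted in Section~\ref{sec:prep}---in particular the extra sentinel at $\lcp[n+1]$---so that the expression $\max\{\lcp[\rank[i]],\lcp[\rank[i]+1]\}$ evaluated inside Algorithm~\ref{algo:one-leftmost} never reads out of bounds and correctly yields $|\llr_i|$ in accordance with Lemma~\ref{lem:llr}.
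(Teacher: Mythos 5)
Your proposal is correct and follows essentially the same route as the paper's own proof: construct the suffix array, rank array, and lcp array in $O(n)$ time and space using standard linear-time algorithms, then invoke Algorithm~\ref{algo:one-leftmost} and appeal to Lemma~\ref{lem:one-leftmost} for correctness, the $O(x)\subseteq O(n)$ query time, and the leftmost tie-breaking guarantee. Your added remark about the $\lcp[n+1]$ sentinel is a sensible implementation detail but does not change the argument.
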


\begin{proof}
  The suffix array of $S$ can be constructed by existing algorithms
  using $O(n)$ time and space (e.g., \cite{KA-SA2005}). After the
  suffix array is constructed, the rank array can be trivially created
  using another $O(n)$ time and space.  We can then use the suffix array and
  the rank array to construct the lcp array using another $O(n)$ time
  and space~\cite{KLAAP01}.  Given the rank array and the lcp array,
  the time cost of Algorithm~\ref{algo:one-leftmost} is $O(x)$
  (Lemma~\ref{lem:one-leftmost}). So altogether, we can find $\lr_x^y$ or the
  fact that it does not exists using $O(n)$ time and space.
If there are multiple choices for $\lr_x^y$, the leftmost
choice will be returned, as is claimed in Lemma~\ref{lem:one-leftmost}. 
\end{proof}

%\noindent
%{\bf Challenge: Can we do it $o(k)$ time with or without additional
%  data structures ?}

\begin{algorithm}[h]
{\scriptsize
  \caption{Find all LRs that cover a given string position interval
    \textrm{$[x .. y]$}}
\label{algo:one-all}
\KwIn{(1) Two integers $x$ and $y$, $1\leq x \leq y \leq n$, representing 
a string position interval $[x .. y]$. \newline (2) The rank array and 
      the lcp array of the string $S$.} 
\KwOut{All LRs that cover the position interval \textrm{$[x .. y]$}
  or the fact that no such LR exists.}

\smallskip 

\tcc{Find the length of $\lr_x^y$.}
$length \leftarrow 0$\;
\For{$i = x$ down to $1$}{
  $L \leftarrow \max\{\lcp[\rank[i]],\lcp[\rank[i]+1]\}$\tcp*{$|\llr_i|$}

  \If%(\tcc*[f]{$\llr_i$ does not exist or does not cover $[x.. y]$.})
 {$L=0$ or $i+L-1<y$}{break\tcc*{$\llr_i$ does not exist or does not
     cover $[x.. y]$, so we can early stop.}}
  \lElseIf{$L > length$}
   {$length\leftarrow L$\;}
}

%Print $\lr_k\leftarrow (start,length)$\;

\smallskip 

\tcc{Find all LRs that cover position interval $[x.. y]$.}

\If(\tcp*[f]{$\lr_x^y$ does exist.}){$length>0$}{
  \For{$i = x$ down to $1$}{
    $L \leftarrow \max\{\lcp[\rank[i]],\lcp[\rank[i]+1]\}$\tcp*{$|\llr_i|$}
   \If%(\tcp*[f]{$\llr_i$ does not exist or does not cover $[x.. y]$.})
     {$L=0$ or $i+L-1<y$}{break\tcp*{Early stop}}
    \ElseIf{$L = length$}
    {Print $\lr_x^y\leftarrow (i,i+length-1 )$\;}
  }
}
\lElse{Print $\lr_x^y\leftarrow (-1,-1)$\tcp*{$\lr_x^y$ does not exist.}}
}%\small
\end{algorithm}

\noindent{\bf Extension: find all LRs covering a given position interval.} 
It is trivial to extend Algorithm~\ref{algo:one-leftmost} to find all
the LRs covering any given position interval $[x.. y]$ as
follows. We can first use a similar procedure 
as Algorithm~\ref{algo:one-leftmost} to calculate
$|\lr_x^y|$. If $\lr_x^y$ does exist, then we will start over
the procedure again to re-check every $\llr_i$, $i\leq x$, and return
every LLR whose length is equal to $|\lr_x^y|$.  Due to
Lemma~\ref{lem:llr-cover}, the same early stop as what we have in
Algorithm~\ref{algo:one-leftmost} can be used for practical
speedup. Algorithm~\ref{algo:one-all} shows the pseudocode of this
procedure, which
clearly spends an extra $O(x)$ time. Using
Theorem~\ref{thm:one-leftmost}, we have:

\begin{theorem}
\label{thm:one-all}
For any position interval $[x.. y]$ in the string $S$, we
can find all choices of $\lr_x^y$ or the fact that $\lr_x^y$ does not exist, using
 $O(n)$ time and space.
\end{theorem}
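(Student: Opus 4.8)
The plan is to build directly on Theorem~\ref{thm:one-leftmost} and Algorithm~\ref{algo:one-all}, treating the ``find all'' version as a two-pass refinement of the leftmost-only search. First I would reuse the preprocessing of Theorem~\ref{thm:one-leftmost} verbatim: construct the suffix array, then the rank array, and then the lcp array, each in $O(n)$ time and space. Once these arrays are in hand, the only remaining work is the two loops of Algorithm~\ref{algo:one-all}, so the entire argument reduces to establishing the correctness and the cost of those loops.

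For the first loop, I would argue that scanning $i = x, x-1, \ldots$ and recording $L^\ast = |\lr_x^y|$ as the largest $|\llr_i|$ among positions $i \leq x$ whose $\llr_i$ covers $[x.. y]$ correctly computes the length of the longest repeat covering $[x.. y]$. This is immediate from Lemma~\ref{lem:lr-llr}: every choice of $\lr_x^y$ is an LLR, and since it covers $[x.. y]$ it must start at some $i \leq x$ and end at some $j \geq y$, so its length appears among the quantities $|\llr_i|$ that the loop examines. The early stop is justified by Lemma~\ref{lem:llr-cover}: as soon as some $\llr_i$ fails to cover $[x.. y]$ (or fails to exist), every $\llr_{i'}$ with $i' < i$ also fails, so no relevant candidate is skipped.

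For the second loop, the key claim is that printing every $\llr_i$ with $i \leq x$ satisfying $|\llr_i| = L^\ast$ and $i + |\llr_i| - 1 \geq y$ enumerates \emph{exactly} the choices of $\lr_x^y$, with neither duplicates nor omissions. Soundness follows because any LLR of length $L^\ast$ that covers $[x.. y]$ is a repeat of maximum covering length, hence is itself a valid $\lr_x^y$; and since each position $i$ is visited once and distinct starting positions give distinct substring choices, no duplicate is emitted. Completeness again invokes Lemma~\ref{lem:lr-llr}: each choice of $\lr_x^y$ is a useful LLR starting at some $i \leq x$, so it equals $\llr_i$, has length $L^\ast$, covers $[x.. y]$, and is therefore reached before the (Lemma~\ref{lem:llr-cover}-justified) early stop fires. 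I expect this soundness-plus-completeness bookkeeping, together with verifying that the same early stop is safe in the second pass, to be the most delicate part, since it is precisely where the equivalence between ``a longest repeat covering $[x.. y]$'' and ``a maximal-length LLR covering $[x.. y]$'' must be made airtight.

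Finally, the complexity bound is routine. Each of the two loops runs at most $x \leq n$ iterations, each costing $O(1)$ given constant-time access to $\rank$ and $\lcp$, so the two passes together take $O(x) = O(n)$ time; the dominant space is the $O(n)$ occupied by the suffix, rank, and lcp arrays. Adding the $O(n)$ preprocessing inherited from Theorem~\ref{thm:one-leftmost} yields the claimed $O(n)$ time and space overall.
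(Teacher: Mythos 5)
Your proposal is correct and takes essentially the same route as the paper: the paper's own argument is exactly the two-pass refinement of Algorithm~\ref{algo:one-leftmost} (first pass computes $|\lr_x^y|$, second pass reports every $\llr_i$ of that length covering $[x.. y]$), justified by Lemma~\ref{lem:lr-llr} and the Lemma~\ref{lem:llr-cover} early stop, with the $O(n)$ preprocessing inherited from Theorem~\ref{thm:one-leftmost}. Your soundness/completeness bookkeeping is a more explicit writeup of what the paper dismisses as a ``trivial extension,'' but it is the same argument.
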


\section{A geometric perspective of the useful LLRs and the LR queries}
\label{sec:geo}

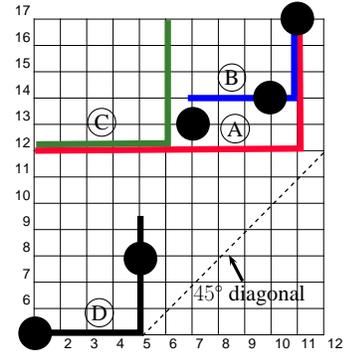
\begin{figure}[t]
 \centering
% Generated with LaTeXDraw 2.0.8
% Sat Oct 04 12:46:13 PDT 2014
% \usepackage[usenames,dvipsnames]{pstricks}
% \usepackage{epsfig}
% \usepackage{pst-grad} % For gradients
% \usepackage{pst-plot} % For axes
\scalebox{0.35} % Change this value to rescale the drawing.
{
\begin{pspicture}(0,-6.64)(13.24,6.66)
\definecolor{color0c}{rgb}{0.5019607843137255,0.5019607843137255,0.5019607843137255}
\definecolor{color1132}{rgb}{1.0,0.0,0.2}
\rput(0.12,-11.02){\psgrid[gridwidth=0.0182,subgridwidth=0.014111111,gridlabels=15.0pt,subgriddiv=1,subgridcolor=color0c](1,5)(1,5)(12,17)}
\psline[linewidth=0.05cm,linestyle=dashed,dash=0.16cm 0.16cm](5.24,-6.06)(12.12,0.9)
\psline[linewidth=0.24](1.0,-5.94)(5.16,-5.94)(5.16,-1.5)
\psline[linewidth=0.24,linecolor=color1132](1.16,0.98)(11.24,1.06)(11.2,6.1)
\psline[linewidth=0.24,linecolor=blue](6.96,2.98)(11.0,2.98)(11.0,6.06)
\psline[linewidth=0.24,linecolor=OliveGreen](1.2,1.22)(6.2,1.26)(6.2,5.94)
\usefont{T1}{ptm}{m}{n}
\rput(3.597871,-5.18){\Huge \pscirclebox[linewidth=0.04]{D}}
\usefont{T1}{ptm}{m}{n}
\rput(8.632285,3.76){\Huge \pscirclebox[linewidth=0.04]{B}}
\usefont{T1}{ptm}{m}{n}
\rput(3.6856446,2.06){\Huge \pscirclebox[linewidth=0.04]{C}}
\usefont{T1}{ptm}{m}{n}
\rput(8.754726,1.82){\Huge \pscirclebox[linewidth=0.04]{A}}
\usefont{T1}{ptm}{m}{n}
\rput(9.300019,-4.52){\Huge $45^\circ$ diagonal}
\psline[linewidth=0.12cm,arrowsize=0.05291667cm 2.0,arrowlength=1.4,arrowinset=0.4]{->}(9.04,-3.98)(8.52,-2.98)
\psdots[dotsize=1.28](1.16,-5.96)
\psdots[dotsize=1.28](11.12,6.0)
\psdots[dotsize=1.28](10.08,3.0)
\psdots[dotsize=1.28](7.16,2.0)
\psdots[dotsize=1.28](5.16,-3.12)
\end{pspicture} 
}
\caption{The 2d geometric perspective on the useful LLRs of string
  $S={\tt aaababaabaaabaaab}$ and its several generalized LR queries.  (A)
  The $\llrc$ array saves all the useful LLRs in the strictly
  increasing order of their string positions: $\{(1,5), (5,8), (7,13),
  (10,14), (11,17)\}$, where each useful LLR is a $(start,end)$ tuple,
  representing the start and ending position of the LLR.  By viewing
  the $start$ and $end$ positions as the $x$ and $y$ coordinates, all
  the useful LLRs of the example string can be visualized as the dark
  dots in the figure.  (B) Queries for $\lr_{11}^{12}$,
  $\lr_{11}^{14}$, $\lr_{6}^{12}$ and $\lr_{5}^{5}$ are visualized by
  the red, blue, green, and black polylines, numbered
  \protect\circled{A}--\protect\circled{D}, respectively.  (C) All
  dark dots and polylines are on or above the $45^{\circ}$ diagonal.}
\label{fig:geo}
\end{figure}

\remove{

\begin{figure}[!t]

\begin{center}
\begin{minipage}[t]{.47\textwidth}
  \vspace{0pt}  
  \begin{algorithm}[H]
{\scriptsize
  \caption{\footnotesize The computation of LLRc, the array of the useful LLRs
    stored in the ascending order of their string positions.}
\label{algo:useful}
\KwIn{The rank and lcp arrays of the string $S$.} 

\smallskip 

$j\leftarrow 1$; 
$prev \leftarrow 1$\;

\For{$i=1\ldots n$}{
  \tcp{Length of $\llr_i$}
  $L \leftarrow \max\{\lcp[\rank[i]],\lcp[\rank[i]+1]\}$\;

\smallskip 

  \tcp{$\llr_i$ is useful.}
  \If{$L > 0$ and $L \geq prev$}{
    $\llrc[j] \leftarrow (i, i+L-1)$\;
    $j\leftarrow j+1$\;
  }

\smallskip 

  $prev \leftarrow L$\;
}
\Return{$\llrc$\;}
}
 \end{algorithm}
\end{minipage}%
  \hspace*{2mm}
\begin{minipage}[t]{.52\textwidth}
  \vspace*{-3mm}
 \centering
% Generated with LaTeXDraw 2.0.8
% Mon Sep 29 12:05:09 PDT 2014
% \usepackage[usenames,dvipsnames]{pstricks}
% \usepackage{epsfig}
% \usepackage{pst-grad} % For gradients
% \usepackage{pst-plot} % For axes
\scalebox{0.35} % Change this value to rescale the drawing.
{
\begin{pspicture}(0,-6.72)(14.68,6.72)
\definecolor{color1161c}{rgb}{0.5019607843137255,0.5019607843137255,0.5019607843137255}
\definecolor{color27}{rgb}{0.9372549019607843,0.10588235294117647,0.8313725490196079}
\definecolor{color207}{rgb}{0.9411764705882353,0.043137254901960784,0.09803921568627451}
\definecolor{color227}{rgb}{0.047058823529411764,0.08235294117647059,0.9137254901960784}
\definecolor{color222}{rgb}{0.023529411764705882,0.058823529411764705,0.01568627450980392}
\rput(1.34,-10.0){\psgrid[gridwidth=0.0082,subgridwidth=0.014199999,gridlabels=18.0pt,subgriddiv=1,subgridcolor=color1161c](0,4)(0,4)(12,16)}
\psdots[dotsize=0.68](6.36,2.0)
\psline[linewidth=0.01cm,linestyle=dashed,dash=0.17638889cm 0.10583334cm](5.36,-5.965889)(13.34,1.9741111)
\psdots[dotsize=0.68](4.36,-0.08)
\psdots[dotsize=0.68](2.36,-4.04)
\psdots[dotsize=0.68](8.36,3.96)
\psdots[dotsize=0.68](10.4,4.96)
\psline[linewidth=0.2,linecolor=color27](1.36,-5.025889)(5.34,-5.025889)(5.36,6.134111)
\psline[linewidth=0.2,linecolor=color207](1.34,0.8141111)(11.44,0.8141111)(11.44,6.1941113)
\psline[linewidth=0.2,linecolor=color222](1.34,1.0141112)(7.34,1.0141112)(7.34,6.134111)
\psline[linewidth=0.2,linecolor=color227](1.36,4.994111)(11.26,4.994111)(11.26,6.174111)
\usefont{T1}{ptm}{m}{n}
\rput(10.089375,-4.625889){\Huge 45 degree diagonal}
\psline[linewidth=0.14cm,arrowsize=0.05291667cm 2.0,arrowlength=1.4,arrowinset=0.4]{->}(10.06,-4.145889)(8.92,-2.705889)
\end{pspicture} 
}
\captionof{figure}{Geometric perspective on useful LLRs and
  generalized LR
  queries. Black dots represent useful LLRs. Queries for 
  $\lr_{4}^{5}$, $\lr_{10}^{11}$, $\lr_{10}^{15}$ and
  $\lr_{6}^{11}$ are visualized as the purple, red, blue and black
  polylines, respectively.}
\label{fig:geo}

\end{minipage}

\end{center}

\end{figure}

}

In this section, we present a geometric perspective of the useful
LLRs and the generalized LR queries.
This perspective is sparked by the idea presented
in~\cite{HPT-spire2014},
% for \emph{Shortest Unique Queries on Strings},
%a related problem, between which and our problem however the reduction
%is not understood (rewrite this sentence).
which
%This perspective 
serves as the
intuition behind the algorithms
in Sections~\ref{sec:2d} and~\ref{sec:1d} 
 that share the similar spirit of those for SUS finding in~\cite{HPT-spire2014}.
We start with the following lemma that says the useful
LLRs are easy to compute.

\begin{lemma}
\label{lem:comp-useful}
Given the lcp and rank arrays of the string $S$, 
we can compute its useful LLRs in $O(n)$ time and space. 
\end{lemma}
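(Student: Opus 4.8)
The plan is to turn \emph{usefulness} into a purely local length test that a single left-to-right scan can evaluate in $O(1)$ time per position. By Lemma~\ref{lem:llr}, writing $L_i = \max\{\lcp[\rank[i]],\lcp[\rank[i]+1]\}$, the repeat $\llr_i$ exists exactly when $L_i>0$, in which case $\llr_i = S[i.. i+L_i-1]$; all the $L_i$ are obtainable in $O(1)$ each from the supplied rank and lcp arrays. So the real content is deciding, for each $i$ with $L_i>0$, whether the occurrence interval of $\llr_i$ is nested inside that of another LLR (Definition~\ref{def:useful}). I write $E_i = i+L_i-1$ for the ending position of $\llr_i$.

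The first and central step is to show that ending positions are non-decreasing: whenever $\llr_i$ and $\llr_{i+1}$ both exist, $E_{i+1}\ge E_i$. This is immediate from Lemma~\ref{lem:llr-length}, which gives $L_{i+1}\ge L_i-1$ and hence $E_{i+1}=i+L_{i+1}\ge i+L_i-1=E_i$. I then handle the gaps, i.e.\ singleton positions $k$ with $L_k=0$: no repeat can properly span a singleton (a second occurrence of such a repeat would furnish a second copy of $S[k]$), so every LLR starting before $k$ ends before $k$. Combining the two facts, whenever $\llr_{i-1}$ exists we get $E_{i-1}=\max\{E_{i'}\st i'<i \text{ and } \llr_{i'} \text{ exists}\}$, while if $\llr_{i-1}$ does not exist then every earlier LLR ends strictly before position $i$.

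The second step localizes the test. Since $\llr_i$ occupies the interval $[i.. E_i]$, any \emph{other} LLR whose occurrence contains it must start at some $i'<i$ with $E_{i'}\ge E_i$; by the previous paragraph this happens iff the best earlier reach already satisfies $E_{i-1}\ge E_i$, equivalently $L_{i-1}\ge L_i+1$, i.e.\ $L_i<L_{i-1}$. Hence $\llr_i$ is useful exactly when $L_i>0$ and $L_i\ge L_{i-1}$ (the singleton case $L_{i-1}=0$ being subsumed, as the test then reduces to $L_i>0$). A symmetric argument shows that for two useful LLRs at positions $i<i''$ one has $E_i<E_{i''}$ strictly, so the useful LLRs form a staircase with strictly increasing starts and strictly increasing ends.

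Given this characterization the algorithm writes itself: scan $i=1,\ldots,n$, keep $prev=L_{i-1}$, compute $L_i$ in $O(1)$, and append $(i,\,i+L_i-1)$ to $\llrc$ exactly when $L_i>0$ and $L_i\ge prev$. Each position is touched once with $O(1)$ work and at most $n$ tuples are stored, giving $O(n)$ time and space; because positions are processed left to right, $\llrc$ emerges already sorted by start position. I expect the main obstacle to be the first step, namely justifying that a single comparison against $\llr_{i-1}$ is enough to certify uselessness; this rests entirely on the monotonicity of the ending positions together with a careful treatment of the singleton/gap boundary cases, after which correctness and the complexity bound are routine.
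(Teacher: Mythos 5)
Your proof is correct and follows essentially the same route as the paper's: a single left-to-right scan that computes $L_i$ from the rank and lcp arrays via Lemma~\ref{lem:llr} and keeps $\llr_i$ exactly when $L_i>0$ and $L_i\ge L_{i-1}$, with correctness resting on the end-position monotonicity implied by Lemma~\ref{lem:llr-length} (the paper states only this and leaves the rest implicit, whereas you also spell out the singleton-gap case and why comparing against $\llr_{i-1}$ alone suffices). Your filter ``$L_i>0$ and $L_i\ge L_{i-1}$'' is equivalent to the paper's elimination rule ``$|\llr_i|=0$ or $|\llr_i|=|\llr_{i-1}|-1$'' and matches Algorithm~\ref{algo:useful} exactly.
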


\begin{proof}
  By Lemma~\ref{lem:llr-length}, we know if $\llr_{i-1}$ exists,
  the right boundary of $\llr_i$ is on or after the right boundary of
  $\llr_{i-1}$, for any $i\geq 2$, so we can construct the array of
  useful LLRs in one pass as follows: we  calculate each
  $\llr_i$ using Lemma~\ref{lem:llr}, for $i=1,2,\ldots,n$, and 
  eliminate (useless) $\llr_i$, if $|\llr_i| = 0$ or $|\llr_i| =
  |\llr_{i-1}|-1$. 
\end{proof}

\begin{definition} {\bf $\llrc$} is an array of useful LLRs, which
  are saved in the ascending order of their start position. We use
  $\llrc.size$ to denote the number of elements in  $\llrc$.
\end{definition}

Algorithm~\ref{algo:useful} shows the procedure for the $\llrc$ array
construction in $O(n)$ time and space, provided with the suffix array
and lcp array of  $S$. Each $\llrc$ array element is a
$(start,end)$ tuple, representing the start and ending positions of
the useful LLR. Because no useful LLR is a substring of another
useful LLR, we have the following fact.

 \begin{algorithm}[t]
{\scriptsize
  \caption{\footnotesize The calculation of LLRc, the array of useful
    LLRs, saved in ascending order of their positions.}
\label{algo:useful}
\KwIn{The rank and lcp arrays of the string $S$.} 

\smallskip 

$j\leftarrow 1$; 
$prev \leftarrow 1$\;

\For{$i=1\ldots n$}{
  $L \leftarrow \max\{\lcp[\rank[i]],\lcp[\rank[i]+1]\}$\tcp*{$|\llr_i|$}

\smallskip

  \If(\tcp*[f]{$\llr_i$ is useful.}){$L > 0$ and $L \geq prev$}{
    $\llrc[j] \leftarrow (i, i+L-1)$;
    $j\leftarrow j+1$  
  }

\smallskip 

  $prev \leftarrow L$\;
}
\Return{$\llrc$\;}
}
 \end{algorithm}

\begin{fact}
\label{fact:useful}
All elements in the $\llrc$ array have their both start and ending
positions in strictly increasing order. 
That is,  
for any $i$ and
$j$, $1\leq i < j \leq \llrc.size$: 
$\llrc[i].start <
\llrc[j].start$ and $\llrc[i].end < \llrc[j].end$.
\end{fact}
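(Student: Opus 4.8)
The plan is to split the statement into its two independent claims---strictly increasing start positions and strictly increasing end positions---and dispatch them separately, since the first is essentially bookkeeping while the second carries the real content. Throughout I would write each useful LLR as an occurrence $S[a.. b]$ and freely identify it with its position interval $[a.. b]$, so that ``$X$ is a substring of $Y$'' (as in Definition~\ref{def:useful}) is witnessed whenever the interval of $X$ is contained in the interval of $Y$.

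For the start positions I would argue directly from the definition of $\llrc$. Every string position $k$ gives rise to at most one $\llr_k$, since Definition~\ref{def:llr} fixes the left endpoint and forces the length to be maximal; hence distinct useful LLRs necessarily have distinct start positions. Because $\llrc$ stores them in ascending order of start position---equivalently, because Algorithm~\ref{algo:useful} appends them while scanning $i=1,\ldots,n$---we immediately obtain $\llrc[i].start < \llrc[j].start$ for all $i<j$.

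For the end positions I would use a single contradiction argument that yields strictness for free. Fix $i<j$, so that $P=\llrc[i]$ and $Q=\llrc[j]$ satisfy $P.start<Q.start$, and suppose toward a contradiction that $P.end\ge Q.end$. Since $P$ and $Q$ are genuine (length-$\ge 1$) LLRs we have $P.start\le P.end$ and $Q.start\le Q.end$; combined with $P.start<Q.start$ and $P.end\ge Q.end$, this forces the proper containment $[Q.start.. Q.end]\subseteq[P.start.. P.end]$. Hence the occurrence $Q$ sits inside the occurrence $P$, so the string $Q$ is a substring of the string $P$, making $Q$ a \emph{useless} LLR by Definition~\ref{def:useful}. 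This contradicts $Q$ being an element of $\llrc$, which by construction contains only useful LLRs. Therefore $P.end<Q.end$, as required.

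The one place I expect to need care is the passage from interval containment to the substring relation that triggers uselessness: I want to confirm that $P.start<Q.start$ together with the assumed $P.end\ge Q.end$ really does place all of $[Q.start.. Q.end]$ inside $[P.start.. P.end]$---the only apparent worry, $Q.start>P.end$, is ruled out because it would give $Q.start>Q.end$. Once that is nailed down, no monotonicity input is needed; in particular the non-decreasing right-boundary observation used in Lemma~\ref{lem:comp-useful} (itself a consequence of Lemma~\ref{lem:llr-length}) is subsumed here and offers at most an alternative route: one could first establish that ends are non-decreasing and then rule out equality via the same suffix-containment step.
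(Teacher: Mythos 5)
Your proof is correct and follows essentially the same route as the paper, which justifies the fact in one line by observing that $\llrc$ is sorted by start position and that no useful LLR can be a substring of another (your containment-implies-useless contradiction is exactly the expansion of that remark). The extra care you take in checking that $P.start < Q.start$ and $P.end \ge Q.end$ really force interval containment is sound and fills in the only detail the paper leaves implicit.
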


If we view each useful LLR's start position as the $x$ coordinate and
ending position as the $y$ coordinate, each useful LLR can be viewed
as a dot in the 2d space. All the 2d dots, representing all the useful
LLRs that are saved in the LLRc array, are distributed in the 2d space
from the low-left corner toward the up-right corner. Because of
Fact~\ref{fact:useful},
no two dots share the same
$x$ or $y$ coordinates. Further, since every
dot's $y$ coordinate is no less than its $x$ coordinate, those dots
are on or above the $45^\circ$ diagonal. Figure~\ref{fig:geo} shows this geometric
perspective of several useful LLRs.

\begin{definition}
\label{def:weight}
The \emph{weight} of a dot $(x,y)$, 
representing a useful $\llr_x=S[x..y]$, is $|\llr_x| = y-x+1$, the length of  $\llr_x$.
\end{definition}

\begin{definition}
\label{def:subspace}
$S_{x,y} = \{(a,b)\in \llrc \mid a \leq x, b \geq y\}$.%, $x, y\in \mathbb{N}$.
\end{definition}

If we draw in the 2d space
a {\LARGE
  $\lrcorner$} shaped orthogonal polyline whose angle locates at position $(x,y)$,
$S_{x,y}$ is the set of 2d dots, representing those 
useful LLRs that are located on the up-left side (inclusive) of 
the polyline.

Because any LR must be useful LLR (Lemma~\ref{lem:lr-llr}), from 
this geometric perspective, the answer
to the $\lr_x^y$ query becomes the heaviest dot(s), whose horizontal
coordinate is $\leq x$ and whose vertical coordinate is
$\geq y$. That is, $\lr_x^y$ are the heaviest dots
in  $S_{x,y}$. If $S_{x,y}$ is empty, it means $\lr_x^y$ does
not exist. Figure~\ref{fig:geo} shows this geometric perspective of
several generalized LR queries.

\section{AN INDEX OF {\large $O(occ \cdot \log n)$} QUERY TIME}
\label{sec:2d}

\begin{algorithm}[t]
{\scriptsize
  \caption{\footnotesize  Find LR using 2d DMQ.}
\label{algo:2d}
\KwIn{The lcp and rank arrays of the string $S$}

\smallskip 

Compute the LLRc array\tcp*{Algorithm~\ref{algo:useful}}

Build the 2d DMQ index for the LLRc array elements 
\tcp*{Existing technique, e.g.,~\cite{ST-2011pods}}

\smallskip

\tcc{Find one choice of $\lr_x^y$.}
\underline{QueryOne2d($x,y$)}: 

{2dDMQ($x,y$)\tcp*{return $(-1,1)$, if $S_{x,y} = \emptyset$.}}

\smallskip 

\tcc{Find all choices of $\lr_x^y$.}
\underline{QueryAll2d($x,y$)}:

 $(x',y')\leftarrow$ 2dDMQ($x,y$)\;

 \If{$(x',y')\neq (-1,-1)$}{FindAll2d($x,y,y'-x'+1$) \tcp*{Recursive
     searches start.}}

\smallskip

\underline{FindAll2d($x,y,weight$)}: \tcp*[f]{Helper function}

 $(x',y')\leftarrow$ 2dDMQ($x,y$)\;

\If{$(x',y')=(-1,-1)$ or $(y'-x'+1 < weight)$}{\Return\tcp*{Recursion exits.}}

Print $(x',y')$ \tcp*{One choice of $\lr_x^y$ is found.}

\If{$x'-1\geq 1$}{
  FindAll2d($x'-1,y,weight$) \tcp*{New recursive search.}
}

\If{$y'+1\leq n$}{
  FindAll2d($x,y'+1,weight$) \tcp*{New recursive search.}
}

}%\small
\end{algorithm}

As is explained in Section~\ref{sec:geo}, $\lr_x^y$ is the heaviest
dot(s) from the set $S_{x,y}$, if $S_{x,y}$ is not empty; otherwise,
$\lr_x^y$ does not exist. Finding \emph{one} heaviest dot from $S_{x,y}$ is
nothing but the well-known 2d dominance max query.

\noindent{\bf 2d dominance max query (DMQ).} 
Given a set of $n$ dots and any position $(x,y)$ in the 2d space, find
the heaviest dot, whose horizontal coordinate is $\leq x$ and vertical
coordinate is $\geq y$. If there are multiple choices,
ties are resolved arbitrarily.

There exist indexing structures (e.g.,~\cite{ST-2011pods}) that
can be constructed on top of the $n$ dots using $O(n\log n)$ time and
$O(n)$ space, such that by using the indexing structure, any future 2d
DMQ can be answered in
$O(\log n)$ time. 
The reduction from finding an LR  to a 2d DMQ 
immediately gives us the {\tt QueryOne2d} function in
Algorithm~\ref{algo:2d} for finding one choice of an LR.

\remove{
\paragraph{Example (Figure~\ref{fig:geo}):} 
Query $\circled{A}$ is to search for $\lr_{11}^{12}$. That is to
find one of the heaviest dots in $S_{11,12}$, which can be either dot
$(7,13)$ or dot $(11,17)$. So, $\lr_{11}^{12}$ can be either $S[7\dots
13]$ or $S[11.. 17]$, depending on which one is returned by the 2d
DMQ.
Query $\circled{D}$ is to search for
$\lr_5^5$. That is to find one of the heaviest dots in $S_{5,5}$,
which is the dot $(1,5)$. So, $\lr_5^5$ is $S[1\dots 5]$.
}

\begin{theorem}
\label{thm:2d-one}
  We can construct an indexing structure for a string $S$ of size $n$
  using $O(n\log n)$ time and $O(n)$ space, such that by using the
  indexing structure any future generalized $\lr$
  query can be answered in $O(\log n)$ time. If there exist multiple
  choices for the $\lr$ of interest, ties are resolved
  arbitrarily.
\end{theorem}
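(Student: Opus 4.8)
The plan is to assemble the pieces already in place: the $\llrc$ array, the geometric reduction of Section~\ref{sec:geo}, and the off-the-shelf 2d dominance-max-query (DMQ) index. First I would bound the number of dots. By Fact~\ref{fact:useful} the useful LLRs have strictly increasing start positions, each drawn from $\{1,\ldots,n\}$, so $\llrc.size\leq n$. By Lemma~\ref{lem:comp-useful} the $\llrc$ array is computed from the $\rank$ and $\lcp$ arrays in $O(n)$ time and space, and those arrays themselves are obtained from the suffix array in $O(n)$ time and space exactly as in the proof of Theorem~\ref{thm:one-leftmost}. Viewing each $\llrc[i]$ as a 2d dot with coordinates $(a,b)$ and weight $b-a+1$ (Definition~\ref{def:weight}), I would then invoke the existing construction of~\cite{ST-2011pods} on these at most $n$ dots, which costs $O(n\log n)$ time and $O(n)$ space. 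Summing the three construction steps keeps the total build within the claimed $O(n\log n)$ time and $O(n)$ space.

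Next I would establish correctness of a single query. For an interval $[x.. y]$, a useful LLR $S[a.. b]$ covers $[x.. y]$ precisely when $a\leq x\leq y\leq b$, which, since $x\leq y$ always holds, is exactly the condition $a\leq x$ and $b\geq y$ defining $S_{x,y}$ (Definition~\ref{def:subspace}). By Lemma~\ref{lem:lr-llr} every $\lr_x^y$ is a useful LLR and hence a dot of $S_{x,y}$, and its length equals its weight; therefore a heaviest dot of $S_{x,y}$ is precisely a longest repeat covering $[x.. y]$. Consequently the call {\tt 2dDMQ}$(x,y)$ inside {\tt QueryOne2d} of Algorithm~\ref{algo:2d} returns one valid choice of $\lr_x^y$ in $O(\log n)$ time, and an empty $S_{x,y}$ signals that $\lr_x^y$ does not exist. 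Because the DMQ index breaks its internal ties arbitrarily, the returned LR is an arbitrary heaviest candidate, matching the statement's tie-breaking clause.

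The only subtlety I anticipate is confirming that the non-existence cases line up: I would verify that $S_{x,y}=\emptyset$ holds if and only if no repeat covers $[x.. y]$. This again follows from Lemma~\ref{lem:lr-llr}, since any such covering repeat would be a useful LLR dominating $(x,y)$ and thus a dot in $S_{x,y}$; conversely an empty $S_{x,y}$ leaves no candidate LLR, hence no LR. Beyond this check, the argument is pure bookkeeping, as the genuine difficulty (the reduction to 2d DMQ and the availability of an $O(n\log n)$-time, $O(n)$-space, $O(\log n)$-query index) is inherited wholesale from Section~\ref{sec:geo} and~\cite{ST-2011pods}.
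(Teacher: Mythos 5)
Your proposal is correct and follows essentially the same route as the paper's own proof: build the suffix, rank, and lcp arrays in $O(n)$ time and space, extract the $\llrc$ array via Lemma~\ref{lem:comp-useful}, and hand the resulting at-most-$n$ weighted dots to the $O(n\log n)$-time, $O(n)$-space 2d DMQ index of~\cite{ST-2011pods}, answering each query as a dominance-max query on $S_{x,y}$. The only difference is that you spell out the correctness of the geometric reduction and the non-existence case explicitly, which the paper delegates to Section~\ref{sec:geo} and Lemma~\ref{lem:lr-llr}.
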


\begin{proof}
  (1)  The  suffix  array  of  $S$  can  be  constructed  by  existing
  algorithms     using     $O(n)$     time    and     space     (e.g.,
  \cite{KA-SA2005}). After  the suffix array is  constructed, the rank
  array can be trivially created  using $O(n)$ time and space.  We can
  then use  the suffix array and  the rank array to  construct the lcp
  array using another $O(n)$ time and space~\cite{KLAAP01}.  (2) Given
  the rank array and the lcp array, we can construct the $\llrc$ array
  of     useful    LLRs     using    $O(n)$     time     and    space
  (Lemma~\ref{lem:comp-useful}  and Algorithm~\ref{algo:useful}).  (3)
  We then  create the indexing  structure for the LLRc  array elements
  for 2d DMQ, using $O(n\log n)$ time and $O(n)$ space
  (e.g.,  \cite{ST-2011pods}). By using  this index, we  can answer any
  future generalized  LR  query in  $O(\log  n)$ time  and  ties  are  resolved
  arbitrarily.  
\end{proof}

\subsection{Find all choices of any LR.}
We know $\lr_x^y$ are the heaviest dots in $S_{x,y}$, if $S_{x,y}$ is
not empty; otherwise, $\lr_x^y$ does not exist.  Upon receiving a
query for $\lr_x^y$, we first perform a 2d DMQ,
which returns one of the heaviest dots in $S_{x,y}$. If no
such a dot is returned, then $\lr_x^y$ does not exist. Otherwise, suppose $(x',
y')$ is the dot returned, then $(x',y')$ is one of the choices for
$\lr_x^y$. 

%\emph{Find other choices for $\lr_x^y$:} 
Because all the dots representing the LLRc array elements have their
both $x$ and $y$ coordinates strictly
increase~(Fact~\ref{fact:useful}), all other choices (if existing) of
$\lr_x^y$ must be existing in the union of $S_{x'-1,y}$ and $S_{x,
  y'+1}$. Therefore, we can find other choices of $\lr_x^y$ by the
following two recursive searches: one will find one of the
heaviest dots in $S_{x'-1,y}$, the other will find one of the heaviest
dots in $S_{x, y'+1}$. Each of these two recursive searches is again a
2d DMQ.

For each recursive search: (1) If the weight of
the heaviest dot it finds is equal to $y'-x'+1$, the length of
$\lr_x^y$, it will return the found dot as another choice of $\lr_x^y$ and
will then launch its own two new recursive searches, similar to what
its caller has done in order to find  other choices for
$\lr_x^y$; (2) otherwise, it stops and returns to its caller.

Function  {\tt QueryAll2d} in Algorithm~\ref{algo:2d} shows the pseudocode
for finding all choices of $\lr_x^y$.

\begin{example}[Figure~\ref{fig:geo}]
Search $\circled{A}$ is for $\lr_{11}^{12}$. That is to find
all heaviest dots in $S_{11,12}$, which include dot $(7,13)$ and dot
$(11,17)$. Suppose the 2d DMQ launched by search
\circled{A} returns dot $(7,13)$, which has a weight of $7$ and 
is {\bf one choice for $\lr_{11}^{12}$}. The next two recursive
searches launched by search \circled{A} will be search \circled{B}
looking for one of the heaviest dots in $S_{11,14}$ and search
\circled{C} looking for one of the heaviest dots in $S_{6,12}$.

Search \circled{B} will return the heaviest dot $(11,17)$ from
$S_{11,14}$, whose weight is equal to $7$, so the dot $(11,17)$ is
{\bf another choice of
  $\lr_{11}^{12}$}. Search \circled{B} will then launch its own two new
recursive searches for one heaviest dot in each of $S_{10,14}$ and
$S_{11,18}$. (These two searches are not shown in Figure~\ref{fig:geo}
for concision). The search in $S_{10,14}$ returns dot $(10,14)$ whose
weight is less than $7$, so the search stops and returns to its
caller. The search in $S_{11,18}$ finds nothing, so it stops and
returns to its caller. After all its recursive searches return, search
\circled{B} returns to its caller, which is search \circled{A}.

Search \circled{C} finds nothing in $S_{6,12}$, so it stops
and returns to its caller, which is search \circled{A}.

At this point, all the work of search $\circled{A}$ is finished,
and we have found all the choices, which are $S[7.. 13]$ and
$S[11.. 17]$ (or $\llrc[3]$ and $\llrc[5]$, equivalently), for $\lr_{11}^{12}$.
\end{example}

Clearly, the same 2d DMQ index is used in finding all
choices of an LR query, and there are no more than $2 \cdot occ+1$
instances of 2d DMQ, in the finding of all choices
of an LR, where $occ$ is the number of choices of the LR.  Because
each 2d DMQ takes $O(\log n)$ time, we get the
following theorem.

\begin{theorem}
  We can construct an indexing structure for a string $S$ of size $n$
  using $O(n\log n)$ time and $O(n)$ space, such that by using the
  indexing structure, we can find all choices of any LR in $O(occ\cdot
  \log n)$ time, where $occ$ is the number of choices of the LR being queried for.
\end{theorem}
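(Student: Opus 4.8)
The plan is to read off the construction bounds directly from Theorem~\ref{thm:2d-one} and then analyze only the query procedure \texttt{QueryAll2d} of Algorithm~\ref{algo:2d}, since it reuses exactly the same 2d DMQ index built there. Thus the index is the one from Theorem~\ref{thm:2d-one}: the suffix, rank and lcp arrays in $O(n)$ time and space, the $\llrc$ array in $O(n)$ time and space (Lemma~\ref{lem:comp-useful}), and the 2d DMQ structure on its dots in $O(n\log n)$ time and $O(n)$ space. What remains is to prove that \texttt{QueryAll2d}$(x,y)$ (i) reports exactly the choices of $\lr_x^y$ and (ii) issues only $O(occ)$ DMQ calls.

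For correctness, recall from Section~\ref{sec:geo} that the choices of $\lr_x^y$ are precisely the heaviest dots of $S_{x,y}$, all sharing the common weight $w=|\lr_x^y|$; equivalently, every such dot $(a,b)$ satisfies $b=a+w-1$ and so lies on a single line parallel to the diagonal. I would argue by induction on the number of heaviest dots still unreported. The initial DMQ returns some dot $(x',y')$; if it is empty or has weight below $w$, there is nothing (more) to report. Otherwise $(x',y')$ is a genuine choice, and I must show the remaining heaviest dots split, without overlap, between the two recursive subspaces $S_{x'-1,y}$ and $S_{x,y'+1}$. By Fact~\ref{fact:useful} no other dot shares the start coordinate $x'$, so every remaining dot has start either $<x'$ or $>x'$; and because start and end coordinates increase together, a heaviest dot with start $<x'$ has end $b=a+w-1<x'+w-1=y'$, placing it in $S_{x'-1,y}$ but not in $S_{x,y'+1}$, while one with start $>x'$ has end $>y'$, placing it in $S_{x,y'+1}$ but not in $S_{x'-1,y}$.

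This disjoint split is the crux: it guarantees each heaviest dot is found in exactly one branch, so no choice is printed twice and the recursion terminates. A branch halts precisely when its DMQ returns nothing or a dot of weight $<w$, which is correct because the DMQ always returns a heaviest dot of the queried subspace, so a sub-$w$ answer certifies that no further choice lives there. By the inductive hypothesis each branch reports exactly the heaviest dots of its subspace, and together with $(x',y')$ these are all choices of $\lr_x^y$.

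For the running time, model the \texttt{FindAll2d} invocations as a recursion tree. Exactly the $occ$ invocations that print a dot are internal nodes, each spawning at most two children; the non-printing invocations are leaves, of which there are at most $occ+1$ (in a tree whose internal nodes have at most two children, the number of leaves exceeds the number of internal nodes by at most one). Hence there are at most $2\cdot occ+1$ \texttt{FindAll2d} calls, matching the count stated before the theorem, so together with the initial query the procedure issues $O(occ)$ 2d DMQs. Each DMQ costs $O(\log n)$ by Theorem~\ref{thm:2d-one}, giving $O(occ\cdot\log n)$ query time; combined with the $O(n\log n)$-time, $O(n)$-space construction, the theorem follows. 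The main obstacle is the disjointness argument in the second and third paragraphs: without the total order of the heaviest dots on the line $b=a+w-1$, the two recursive subspaces could overlap and a naive analysis would fail to pin the number of DMQ calls to $O(occ)$.
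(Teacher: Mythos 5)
Your proposal is correct and follows essentially the same route as the paper: reuse the 2d DMQ index of Theorem~\ref{thm:2d-one}, recursively split into $S_{x'-1,y}$ and $S_{x,y'+1}$ after each reported dot, and bound the number of DMQ calls by $2\cdot occ+1$ so that the query cost is $O(occ\cdot\log n)$. The one thing you add beyond the paper's argument is the explicit proof (via Fact~\ref{fact:useful} and the fact that all heaviest dots lie on the line $b=a+w-1$) that the two recursive subspaces partition the remaining choices disjointly --- a point the paper asserts only implicitly but which is genuinely needed both to avoid double-reporting and to make the $2\cdot occ+1$ count valid.
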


\section{AN INDEX OF {\large $O(occ)$} QUERY TIME}
\label{sec:1d}
In this section, we present the optimal indexing structure for generalized LR
finding. It is again based on the intuition derived from the geometric
perspective on the relationship between useful LLRs and LR queries
(Section~\ref{sec:geo}).

Recall that the answer for an $\lr_x^y$ query is the heaviest dot(s)
from $S_{x,y}$, if $S_{x,y}$ is not empty. Due to
Fact~\ref{fact:useful}, $S_{x,y}$ corresponds to a continuous chunk of the 
LLRc array, if $S_{x,y}$ is not empty. Therefore, searching for one
heaviest dot in $S_{x,y}$ becomes searching for one heaviest
element within a continuous chunk of the LLRc array, which is nothing
but the \emph{range minimum query} on the array LLRc.\footnote{We 
should actually perform \emph{range maximum query}, which however can be
  trivially reduced to RMQ by viewing each array
  element as the negative of its actual value.}

\noindent{\bf Range minimum query (RMQ).} Given an array $A[1..
n]$ of $n$ comparable elements, find the index of the smallest element
within $A[i.. j]$, for any given $i$ and $j$, $1\leq i \leq j \leq
n$. If there are multiple choices, ties are resolved arbitrarily.

There exist indexing structures (e.g.,~\cite{FH-2006CPM,HT-84SICOMP})
that can be constructed on top of the array $A$ using $O(n)$ time and
space, such that any future RMQ can be answered in $O(1)$ time. 

The next issue is: Upon receiving a query for $\lr_x^y$, for some
$x$ and $y$, $1\leq x \leq y \leq n$, how to find the left and right
boundaries of the continuous chunk of LLRc, over which we will
perform an RMQ ? Due to
Fact~\ref{fact:useful} and with the aid of the geometric perspective
of the useful LLRs, we can observe that the left boundary of the
chunk only depends on the value of $y$, whereas the right boundary of the chunk only
depends on the value of $x$. Intuitively, if one sweeps a horizontal
line starting from position $y$ (inclusive) toward the up direction,
the LLRc array index of the first dot that the line hits is the left
boundary of the RMQ's range. Similarly, if one sweeps a vertical line
starting from position $x$ (inclusive) toward the left direction, the
LLRc array index of the first dot that the line hits is the right boundary
of the RMQ's range. 
The range for RMQ is invalid, if any one of the following three
possibilities happens: 1) No dot is hit by the horizontal line; 2) No
dot is hit by the vertical line; 3) The index of the left boundary of
the range is larger than the index of the right boundary of the range.
An invalid RMQ range means that $\lr_x^y$ does not exist.
See Figure~\ref{fig:geo} for examples. 

More precisely, given the values of
$x$ and $y$ from the query for $\lr_x^y$, the left boundary $L_y$ and the right boundary $R_x$ of
the range for RMQ can be determined as follows:
\remove{
$$
L_y = \left \{
\begin{array}{ll}
\min\{i \mid \llrc[i].end \geq y\},\\ \ \ \ \ \ \ \ \ \ \ \ \ \ \textrm{\ if\ }  \{i \mid \llrc[i].end \geq y\} \neq \emptyset\\
-1, \ \ \ \ \ \ \ \ \textrm{\ otherwise}
\end{array}
\right .\ \ \ \ \ \ \ \ \ \ \ \ \ 
$$

$$
R_x = \left \{
\begin{array}{ll}
\max\{i \mid \llrc[i].start \leq x\}, \\
\ \ \ \ \ \ \ \ \ \ \ \ \  \textrm{\ if\ }  \{i \mid \llrc[i].start \leq x\} \neq \emptyset\\
-1,  \ \ \ \ \ \ \ \  \textrm{\ otherwise}
\end{array}
\right .
$$
}
\remove{
$$
L_y = \left \{
\begin{array}{l}
\min\{i \mid \llrc[i].end \geq y\},\\ 
  \hspace{21mm}\textrm{\ if\ }  \{i \mid \llrc[i].end \geq y\} \neq
  \emptyset\\
-1, \hspace{15mm} \textrm{\ otherwise}
\end{array}
\right .
$$

$$
R_x = \left \{
\begin{array}{l}
\max\{i \mid \llrc[i].start \leq x\}, \\
 \hspace{21mm} \textrm{\ if\ }  \{i \mid \llrc[i].start \leq x\} \neq \emptyset\\
-1, \hspace{15mm}  \textrm{\ otherwise}
\end{array}
\right .
$$
}
{\small
$$
%\hspace*{-6mm}
L_y \hspace*{-1mm}= \hspace*{-1mm}\left \{
\begin{array}{ll}
\hspace*{-2mm}\min\{i \mid \llrc[i].end \geq y\}, &
  \hspace*{-1mm}\textrm{\ if\ }  \{i \mid \llrc[i].end \geq y\} \neq
  \emptyset\\
\hspace*{-2mm}-1, &  \hspace*{-1mm}\textrm{\ otherwise}
\end{array}
\right .
$$
$$
%\hspace*{-6mm}
R_x \hspace*{-1mm}= \hspace*{-1mm} \left \{
\begin{array}{ll}
\hspace*{-2mm}\max\{i \mid \llrc[i].start \leq x\}, &
 \hspace*{-4mm} \textrm{\ if\ }  \{i \mid \llrc[i].start \leq x\} \neq \emptyset\\
\hspace*{-2mm}-1, & \hspace*{-4mm}  \textrm{\ otherwise}
\end{array}
\right .
$$
}
Further, we can pre-compute $L_y$ and $R_x$, for every 
$x=1,2,\ldots,n$ and $y=1,2,\ldots,n$, and save the results for future
references. Algorithm~\ref{algo:range} shows the procedure for 
computing the $L$ and $R$ arrays, which clearly uses $O(n)$ time
and space.

\begin{lemma} 
\label{lem:range} 
Algorithm~\ref{algo:range} computes $L_1, L_2, \ldots, L_n$ and\\ $R_1,
R_2, \ldots, R_n$ using $O(n)$ time and space. 
\end{lemma}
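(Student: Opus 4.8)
The plan is to verify two things about Algorithm~\ref{algo:range}: that the values it writes into the $L$ and $R$ arrays coincide with the definitions of $L_y$ and $R_x$ given above, and that the whole computation stays within $O(n)$ time and space. The engine behind both parts is Fact~\ref{fact:useful}: since $\llrc[i].start$ and $\llrc[i].end$ are both \emph{strictly increasing} in $i$, each defining set has a very simple shape. The set $\{i \mid \llrc[i].start \leq x\}$ is an index prefix $\{1,2,\ldots,R_x\}$, so its maximum $R_x$ is well defined and \emph{non-decreasing} in $x$; dually, $\{i \mid \llrc[i].end \geq y\}$ is an index suffix $\{L_y,\ldots,\llrc.size\}$, so its minimum $L_y$ is well defined and non-decreasing in $y$. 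This monotonicity is exactly what licenses a single forward-sweeping pointer for each array.

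First I would treat $R$. Sweeping $x$ from $1$ to $n$ while maintaining one pointer $j$ that is advanced as long as the next useful LLR still starts at or before $x$, I would argue from the prefix structure above that after the step for a given $x$ the pointer equals precisely $\max\{i \mid \llrc[i].start \leq x\}$, or else signals the $-1$ case when no index qualifies. Symmetrically, sweeping $y$ from $1$ to $n$ with a pointer that skips over useful LLRs whose end position is still below $y$ yields $L_y = \min\{i \mid \llrc[i].end \geq y\}$, or $-1$. The correctness claim in each case reduces to the fact that a monotone scan over a strictly monotone key lands exactly on the boundary index, supplemented by a check of the two degenerate cases: $x$ smaller than every start (so $R_x=-1$) and $y$ larger than every end (so $L_y=-1$).

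For the complexity bound, the key observation is that in each of the two sweeps the pointer only moves forward and never passes $\llrc.size \leq n$, so the total number of pointer advances across all $n$ values of $x$ (respectively $y$) is $O(n)$; every remaining operation inside an iteration is $O(1)$, giving $O(n)$ time overall by the standard amortized argument. The space is $O(n)$ because the output arrays $L[1..n]$ and $R[1..n]$ and the input $\llrc$ array are each of size $O(n)$, with only a constant number of scalar variables besides.

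I expect the only real subtlety here—rather than a genuine obstacle—to be the bookkeeping around the $-1$ sentinel and the array boundaries: one must confirm that the pointer is read as ``not found'' exactly when the relevant defining set is empty, and that these boundary iterations do not break the amortized counting. Everything beyond that follows directly from the strict monotonicity supplied by Fact~\ref{fact:useful}, which is why the running time is ``clearly'' $O(n)$.
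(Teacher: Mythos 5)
Your proof is correct, and it supplies the justification that the paper itself omits: the paper simply asserts that Algorithm~\ref{algo:range} ``clearly uses $O(n)$ time and space'' and states the lemma without proof. Your argument --- that Fact~\ref{fact:useful} makes $\{i \mid \llrc[i].start \leq x\}$ a prefix of indices and $\{i \mid \llrc[i].end \geq y\}$ a suffix, so that $R_x$ and $L_y$ are monotone in $x$ and $y$ and a single sweeping pointer suffices, with $O(n)$ total pointer movement by the usual amortized count --- is exactly the intended reasoning. One small mismatch worth noting: you describe the pointer as being advanced ``as long as'' the next element still qualifies (a while-style advance), whereas the pseudocode advances it \emph{at most once} per outer iteration. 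That single advance is nevertheless sufficient because the query value changes by exactly one per iteration while the keys are strictly increasing (e.g., if $y > \llrc[i].end$ then $\llrc[i].end = y-1$, so $\llrc[i+1].end \geq y$); either formulation yields the $O(n)$ bound, but if you want your proof to certify the algorithm exactly as written you should state this invariant rather than the while-loop version.
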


\begin{algorithm}[t]
{\scriptsize
  \caption{\footnotesize Compute $L_i$ and $R_i$ for $i=1,2,\ldots, n$.}
\label{algo:range}
\KwIn{The $\llrc$ array.}
\KwOut{The $L$ and $R$ arrays.}

\smallskip 

\lFor{$i=1\ldots n$}{$L_i \leftarrow -1$; $R_i \leftarrow -1$\tcp*{Initialization.}}

$i \leftarrow 1$\; 

\For{$y=1\ldots n$}{
  \lIf{$y \leq \llrc[i].end$}{
    $L_y \leftarrow i$\;
  }
  \lElseIf{$i < \llrc.size$}{
   $i\leftarrow i+1$;  $L_y \leftarrow i$\;  
  }
  \lElse{break\;}
}

$i \leftarrow \llrc.size$\; 

\For{$x=n \ldots 1$}{
  \lIf{$x \geq \llrc[i].start$}{
    $R_x \leftarrow i$\;
  }
  \lElseIf{$i > 1$}{
   $i\leftarrow i-1$;  $R_x \leftarrow i$\;  
  }
  \lElse{break\;}
}

}%\small
\end{algorithm}

Now we are ready to present the algorithm for finding one choice of a
generalized LR query. Algorithm~\ref{algo:rmq} (through
Line~\ref{line:one}) gives the pseudocode. After array LLRc is
created, we will compute the $L$ and $R$ arrays using the LLRc array
(Algorithm~\ref{algo:range}). Then we will create the RMQ structure
for the LLRc array, where the weight of each array element is defined
as the length of the corresponding LLR (or, from the geometric
perspective, is the weight of the 2d dot representing that LLR), using
existing techniques (e.g.,~\cite{FH-2006CPM,HT-84SICOMP}). Upon
receiving a query for $\lr_x^y$, function {\tt QueryOneRMQ(x,y)}
performs an RMQ over the range $\llrc[L_y,R_x]$, if $1\leq L_y \leq
R_x\leq n$; otherwise, it returns $(-1,-1)$, meaning $\lr_x^y$ does
not exist. The answer returned by the RMQ is  one of the choices for
$\lr_x^y$. If there exist multiple choices for $\lr_x^y$, ties are
resolved arbitrarily, depending on which heaviest element in the range
is returned by the RMQ.

\begin{algorithm}[t]
{\scriptsize
  \caption{\footnotesize  Find LR using RMQ.}
\label{algo:rmq}
\KwIn{The lcp and rank arrays of the string $S$.}

\smallskip 
Compute the LLRc array\tcp*{Algorithm~\ref{algo:useful}}
Compute the $L$ and $R$ arrays from the LLRc array \tcp*{Algo.~\ref{algo:range}}
Construct the RMQ structure for the LLRc array\tcp*{\cite{FH-2006CPM,HT-84SICOMP}}

\smallskip 

\tcc{Find one choice of $\lr_x^y$.}
\underline{QueryOneRMQ($x,y$)}:

\If{$L_y\neq -1$ and $R_x \neq -1$ and $L_y\leq R_x$} 
 {\Return{$\llrc\Bigl[RMQ\bigl(\llrc[L_y.. R_x]\bigr)\Bigr]$\;}}
\lElse {\Return{$(-1,-1)$\label{line:one}\tcp*{$\lr_x^y$ does not exist.}}}

\smallskip 

\tcc{Find all choices of $\lr_x^y$.}
\underline{QueryAllRMQ($x,y$)}

\If{$L_y\neq -1$ and $R_x \neq -1$ and $L_y \leq R_x$} 
 {$m \leftarrow RMQ\bigl(\llrc[L_y .. R_x]\bigr)$\; 
   $weight \leftarrow \llrc[m].end-\llrc[m].start+1$\tcp*{$|\lr_x^y|$}
   FindAllRMQ($L_y,R_x,weight$)\tcp*{Recursive searches start}}
\lElse {\Return{$(-1,-1)$\tcp*{$\lr_x^y$ does not exist.}}}

\smallskip

\underline{FindAllRMQ($\ell,r,weight$)} \tcp*[f]{Helper function}

 $m\leftarrow$ $RMQ\bigr(\llrc[\ell .. r]\bigr)$\;

\If{$\llrc[m].end-\llrc[m].start + 1 < weight$}{\Return\tcp*{Recursion exits.}}

Print $\llrc[m]$ \tcp*{One choice of $\lr_x^y$ is found.}

\If{$\ell\leq m-1$}
{FindAllRMQ($\ell,m-1,weight$) \tcp*{New recursive search.}}

\If{$r\geq m+1$}
{FindAllRMQ($m+1,r,weight$) \tcp*{New recursive search.}}
}%\small
\end{algorithm}

\begin{theorem}
  We can construct an indexing structure for a string $S$ of size $n$
  using $O(n)$ time and  space, such that any future generalized $\lr$
  query can be answered in $O(1)$ time. If There exist multiple
  choices for the $\lr$ being queried for, ties are resolved
  arbitrarily.
\end{theorem}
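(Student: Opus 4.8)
The plan is to chain together the data structures already developed in Sections~\ref{sec:geo} and~\ref{sec:1d}, showing that each link costs only $O(n)$ time and space and that a single query reduces to a constant number of $O(1)$ array lookups plus one $O(1)$ RMQ. First I would account for the construction phase. The suffix array of $S$ is built in $O(n)$ time and space (e.g.,~\cite{KA-SA2005}); the $\rank$ array follows by inversion in $O(n)$; the $\lcp$ array is then obtained in another $O(n)$ time and space~\cite{KLAAP01}. From $\rank$ and $\lcp$, Lemma~\ref{lem:comp-useful} (realized by Algorithm~\ref{algo:useful}) yields the $\llrc$ array of useful LLRs in $O(n)$ time and space. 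Lemma~\ref{lem:range} (Algorithm~\ref{algo:range}) then computes $L_1,\ldots,L_n$ and $R_1,\ldots,R_n$ in $O(n)$ time and space. Finally, an $O(n)$-time, $O(n)$-space RMQ index over $\llrc$, where each element's weight is the length of its LLR, is built by an existing linear-time construction~\cite{FH-2006CPM,HT-84SICOMP}. Summing these independent $O(n)$ contributions gives the claimed $O(n)$ total time and space.

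The heart of the argument is the correctness of the single-query routine {\tt QueryOneRMQ} of Algorithm~\ref{algo:rmq}. I would invoke the geometric characterization from Section~\ref{sec:geo}: by Lemma~\ref{lem:lr-llr}, any $\lr_x^y$ is a useful LLR, and the answer is precisely a heaviest dot of $S_{x,y}=\{(a,b)\in\llrc \mid a\leq x,\ b\geq y\}$, with $S_{x,y}=\emptyset$ exactly when $\lr_x^y$ does not exist. The key structural step --- and the part needing the most care --- is to show that $S_{x,y}$ coincides with the contiguous block $\llrc[L_y..R_x]$. This follows from Fact~\ref{fact:useful}: since the $start$ coordinates are strictly increasing, $\{(a,b)\in\llrc\mid a\leq x\}$ is a prefix $\llrc[1..R_x]$; since the $end$ coordinates are strictly increasing, $\{(a,b)\in\llrc\mid b\geq y\}$ is a suffix $\llrc[L_y..\,\llrc.size]$; their intersection is exactly $\llrc[L_y..R_x]$. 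I would then verify that the three ways the range can be invalid --- $L_y=-1$ (no dot has $end\geq y$), $R_x=-1$ (no dot has $start\leq x$), or $L_y>R_x$ (the prefix and suffix are disjoint) --- are precisely the three cases in which $S_{x,y}$ is empty, so returning $(-1,-1)$ is correct. When the range is valid, the RMQ returns the index of a heaviest (longest) element of $\llrc[L_y..R_x]$, which is one choice of $\lr_x^y$; ties are broken arbitrarily by whichever maximizer the RMQ reports.

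Finally I would bound the query time. Reading $L_y$ and $R_x$ and comparing them are constant-time array accesses, and the single RMQ over $\llrc[L_y..R_x]$ runs in $O(1)$ by the chosen index~\cite{FH-2006CPM,HT-84SICOMP}; hence each $\lr_x^y$ query is answered in $O(1)$ time, which completes the theorem. The main obstacle is not the cost accounting --- that is a routine sum of linear terms --- but the structural claim that $S_{x,y}$ is a single contiguous chunk of $\llrc$ addressable by the precomputed boundaries $L_y$ and $R_x$. Once this is pinned down via the strict monotonicity of \emph{both} coordinates guaranteed by Fact~\ref{fact:useful}, the reduction to RMQ and the $O(1)$ query bound are immediate.
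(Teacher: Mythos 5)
Your proposal is correct and follows essentially the same route as the paper's own proof: the identical $O(n)$ construction chain (suffix array, rank, lcp, $\llrc$, the $L$/$R$ arrays, and a linear-time RMQ index) followed by a single $O(1)$ RMQ per query. The only difference is that you spell out explicitly why $S_{x,y}$ equals the contiguous block $\llrc[L_y..R_x]$ via Fact~\ref{fact:useful}, a point the paper establishes in the discussion preceding the theorem rather than inside the proof itself; this is a welcome addition, not a divergence.
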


\begin{proof}
  (1) The suffix array of $S$ can be constructed by existing
  algorithms using $O(n)$ time and space (e.g.,
  \cite{KA-SA2005}). After the suffix array is constructed, the rank
  array can be trivially created using $O(n)$ time and space.  We can
  then use the suffix array and the rank array to construct the lcp
  array using another $O(n)$ time and space~\cite{KLAAP01}.  (2) Given
  the rank array and the lcp array, we can construct the $\llrc$ array
  of useful LLRs using $O(n)$ time and space
  (Lemma~\ref{lem:comp-useful} and Algorithm~\ref{algo:useful}). (3)
  Given the LLRc array, we can compute the $L$ and $R$ arrays using
  another $O(n)$ time and space (Lemma~\ref{lem:range} and
  Algorithm~\ref{algo:range}). (4) We then create the RMQ structure
  for the LLRc array using another $O(n)$ time and space, using
  existing techniques (e.g.,~\cite{FH-2006CPM,HT-84SICOMP}).  So, the
  total time and space cost for building the indexing structure is
  $O(n)$.  By using this RMQ indexing structure and the pre-computed
  $L$ and $R$ arrays, we can answer any future generalized LR query in
  $O(1)$ time (The {\tt QueryOneRMQ} function in
  Algorithm~\ref{algo:rmq}). If there exist multiple choices for the
  LR being searched for, ties are resolved arbitrarily, as is
  determined by the RMQ structure.  
\end{proof}

\subsection{Find all choices of any LR.}
Upon receiving a query for $\lr_x^y$, we first perform an RMQ over
range $\llrc[L_y .. R_x]$ if such range exists; otherwise, it
means $\lr_x^y$ does not exist, and we stop. 
Suppose the range $\llrc[L_y .. R_x]$ is valid and its RMQ returns
$m$, the array index of the heaviest element in the range,
then
$\llrc[m]$ is one of the choices for $\lr_x^y$ and $|\lr_x^y| =
\llrc[m].end - \llrc[m].start+1$. If $\lr_x^y$ has other
choices, those choices must be existing in the union of the ranges 
$\llrc[L_y.. m-1]$ and $\llrc[m+1.. R_x]$. We can find those
choices of $\lr_x^y$ by recursively performing an RMQ on each of those two
ranges. The recursion will exit, if the element returned by RMQ has
a weight smaller than $|\lr_x^y|$ or the range for RMQ is invalid. 
The {\tt QueryAllRMQ} function in Algorithm~\ref{algo:rmq}
shows the pseudocode of this procedure for finding all choices of an LR query.

\begin{example}[Figure~\ref{fig:geo}]
Given the LLRc array of
the example string in Figure~\ref{fig:geo}, Algorithm~\ref{algo:range} computes the $L$ and $R$
arrays.

\remove{
%\begin{center}
{\scriptsize
  \begin{tabular}{c||c|c|c|c|c|c|c|c|c}
  $i$ & $1$ & $2$ & $3$ & $4$ & $5$ & $6$ & $7$ & 
  $8$ & $9$ \\
\hline
\hline
\tallstrut$L_i$ & $1$&$1$&$1$&$1$&$1$&$2$&$2$&$2$&$3$\\
\hline
\tallstrut$R_i$ &$1$&$1$&$1$&$1$&$2$&$2$&$3$&$3$&$3$
\end{tabular}

  \begin{tabular}{c||c|c|c|c|c|c|c|c}
  $i$ & $10$ & $11$ &$12$ &$13$ &$14$ &$15$ 
  &$16$ &$17$ \\
\hline
\hline
\tallstrut$L_i$ &$3$&$3$&$3$&$3$&$4$&$5$&$5$&$5$\\
\hline
\tallstrut$R_i$ &$4$&$5$&$5$&$5$&$5$&$5$&$5$&$5$
\end{tabular}
}
%\end{center}
}

%\remove{
%\begin{center}
{\footnotesize
  \begin{tabular}{c||@{}c@{}|@{}c@{}|@{}c@{}|@{}c@{}|@{}c@{}|@{}c@{}|@{}c@{}|@{}c@{}|@{}c@{}|@{}c@{}|@{}c@{}|@{}c@{}|@{}c@{}|@{}c@{}|@{}c@{}|@{}c@{}|@{}c@{}}
  i \ & \ $1$\ \ & \ $2$\ \ & \ $3$\ \ & \ $4$\ \  & \ $5$\ \  & \ $6$\ \  & \ $7$\ \  & 
  \ $8$\ \  & \ $9$\ \  & \ $10$\ \ & \ $11$\ \ & \ $12$\ \ & \ $13$\ \ & \ $14$\ \
  & \ $15$\ \  
  & \ $16$\ \  & \ $17$\ \   \\
\hline
\hline
$L_i$ \ & $1$&$1$&$1$&$1$&$1$&$2$&$2$&$2$&$3$ & $3$&$3$&$3$&$3$&$4$&$5$&$5$&$5$\\
\hline
$R_i$ \ &$1$&$1$&$1$&$1$&$2$&$2$&$3$&$3$&$3$ &$4$&$5$&$5$&$5$&$5$&$5$&$5$&$5$
\end{tabular}
}
%\end{center}
%}

Upon receiving the query $\lr_{11}^{12}$, we first use the $L$ and $R$
arrays to retrieve the range $[L_{12},R_{11}] = [3,5]$, which is a
valid range for RMQ. Then we perform 
$RMQ\bigl(\llrc[3..5]\bigr)$ of Search \circled{A}
and either $3$ or $5$ can be returned, because both $\llrc[3]$ and
$\llrc[5]$ are the heaviest elements in the range $\llrc[3..
5]$. Suppose $3$ is returned and is saved in $m$, then we get
$\llrc[3]$ as one choice for $\lr_{11}^{12}$ and $|\lr_{11}^{12}| =
|\llrc[3]| = 7$.

Then, we will find other choices for $\lr_{11}^{12}$ by performing a
recursive search on each of the ranges $[L_{12},m-1] = [3,2]$ and
$[m+1,R_{11}] = [4,5]$.  The first range is invalid, so the search
exits (meaning Search $\circled{C}$ in Figure~\ref{fig:geo} will not
be performed).  The search on the second 
range $[4,5]$ (corresponding to Search $\circled{B}$ in
Figure~\ref{fig:geo}), which is valid,
will launch $RMQ\bigl(\llrc[4,5]\bigr)$. The RMQ
will return $5$. Since $|\llrc[5]| =
|\lr_x^y| = 7$, $\llrc[5]$ is another choice for $\lr_x^y$.

Then, the search on the range $[4,5]$ will launch its own two recursive
searches on the ranges $[4,5-1]=[4,4]$ and $[5+1,5] = [6,5]$. The
search on the first range will find the heaviest element's weight is
less than $|\lr_x^y|$, so the search stops. Because the second range
is invalid, the recursive search on that range will stop immediately.

At this point, all choices for $\lr_{11}^{12}$, which are $\llrc[3]$
and $\llrc[5]$, have been found.
\end{example}

Clearly, the same indexing structure is used by all RMQ's in the
search for all choices of $\lr_x^y$. Further, 
there are no more than $2 \cdot occ+1$ RMQ's in the
finding of all choices of one LR, where $occ$ is the number of choices
of the LR.  Because each RMQ takes $O(1)$ time, we get the following
theorem.

\begin{theorem}
  We can construct an indexing structure for a string $S$ of size $n$
  using $O(n)$ time and space, such that by using the indexing
  structure, we can find all choices of any generalized LR in $O(occ)$ time, where
  $occ$ is the number of choices of the LR being queried for.
\end{theorem}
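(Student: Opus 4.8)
The plan is to split the argument into the construction cost and the query cost, reusing the earlier development wherever possible. For construction, the recipe is identical to that in the preceding $O(1)$-query theorem: build the suffix array of $S$ in $O(n)$ time and space, invert it to obtain the rank array, derive the lcp array, then compute the $\llrc$ array of useful LLRs (Lemma~\ref{lem:comp-useful}, Algorithm~\ref{algo:useful}), the $L$ and $R$ arrays (Lemma~\ref{lem:range}, Algorithm~\ref{algo:range}), and finally an RMQ index over $\llrc$ whose key at each entry is the weight (length) of the corresponding useful LLR. Each stage is $O(n)$ time and space, so the whole index is built within the stated budget. It is the \emph{same} index already used by {\tt QueryOneRMQ}; the only new work lies in the query procedure {\tt QueryAllRMQ}/{\tt FindAllRMQ}.

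For correctness I would first recall the geometric characterization from Section~\ref{sec:geo}: the choices of $\lr_x^y$ are exactly the heaviest dots of $S_{x,y}$, and by Fact~\ref{fact:useful} this set forms the contiguous block $\llrc[L_y .. R_x]$ (empty, and hence $\lr_x^y$ nonexistent, precisely when $L_y$ or $R_x$ is $-1$ or $L_y > R_x$). Thus every choice of $\lr_x^y$ is an entry of this block whose weight equals the block maximum $|\lr_x^y|$, and conversely every such maximum-weight entry is a valid choice. It then remains to show that the divide-at-the-maximum recursion in {\tt FindAllRMQ} reports precisely these entries. The argument is the standard one for enumerating all maxima of a range: the first RMQ returns some index $m$ of maximum weight; any other maximum lies strictly left or strictly right of $m$ inside the block, so recursing on $\llrc[\ell .. m-1]$ and $\llrc[m+1 .. r]$ covers all remaining candidates without overlap. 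On a subrange, if its RMQ returns weight $< |\lr_x^y|$ then every entry there is strictly lighter and can be pruned; otherwise the returned entry is another choice and we recurse again. An easy induction on the block length shows each maximum-weight entry is printed exactly once and nothing else is printed.

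For the time bound I would analyse the recursion tree of {\tt FindAllRMQ}. Each invocation performs a single $O(1)$ RMQ (the $L_y$, $R_x$ look-ups are also $O(1)$) and then either exits or prints one choice and spawns two children. Hence the printing invocations number exactly $occ$, each contributing at most two child calls, so the total number of invocations---and therefore of RMQs---is at most $2\cdot occ+1$. Multiplying by the $O(1)$ cost per RMQ yields the claimed $O(occ)$ query time. The main obstacle is the correctness step of the previous paragraph: one must argue carefully that the strict monotonicity of \emph{both} coordinates (Fact~\ref{fact:useful}) is what makes $S_{x,y}$ contiguous and what guarantees the left/right split is simultaneously exhaustive and non-redundant, since it is this property alone that allows a one-dimensional RMQ to recover a genuinely two-dimensional dominance answer.
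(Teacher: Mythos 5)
Your proposal is correct and follows essentially the same route as the paper: it reuses the $O(n)$-time, $O(n)$-space construction pipeline (suffix array, rank, lcp, $\llrc$, $L$/$R$ arrays, RMQ index) from the preceding $O(1)$-query theorem, and bounds the query cost by observing that the divide-at-the-maximum recursion issues at most $2\cdot occ+1$ constant-time RMQs. The added detail on why Fact~\ref{fact:useful} makes $S_{x,y}$ a contiguous block and why the left/right split is exhaustive and non-redundant is a welcome elaboration of what the paper only sketches in the surrounding prose, but it is not a different argument.
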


%----------------------------------------------------------------------------

\begin{figure*}[t]
\begin{tabular}{cc}
\includegraphics[scale=0.995]{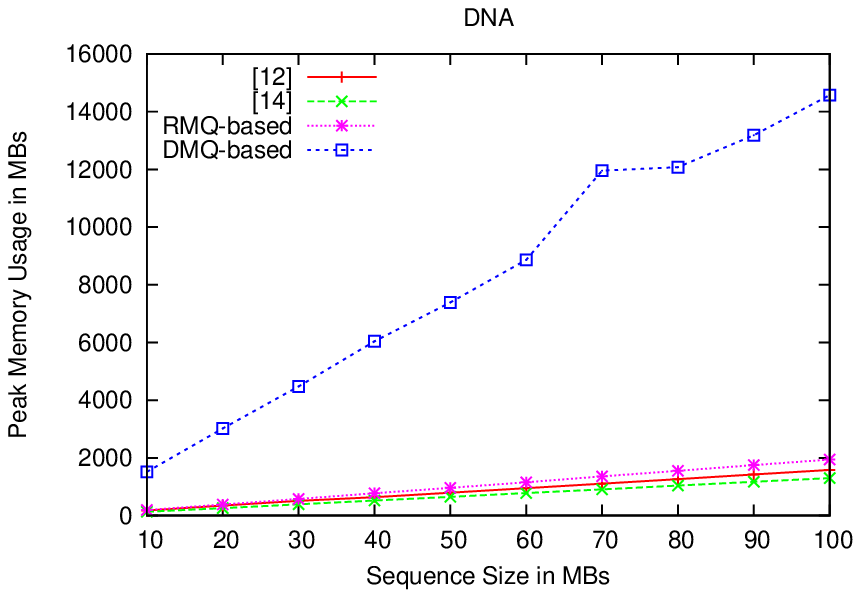} &
\includegraphics[scale=0.995]{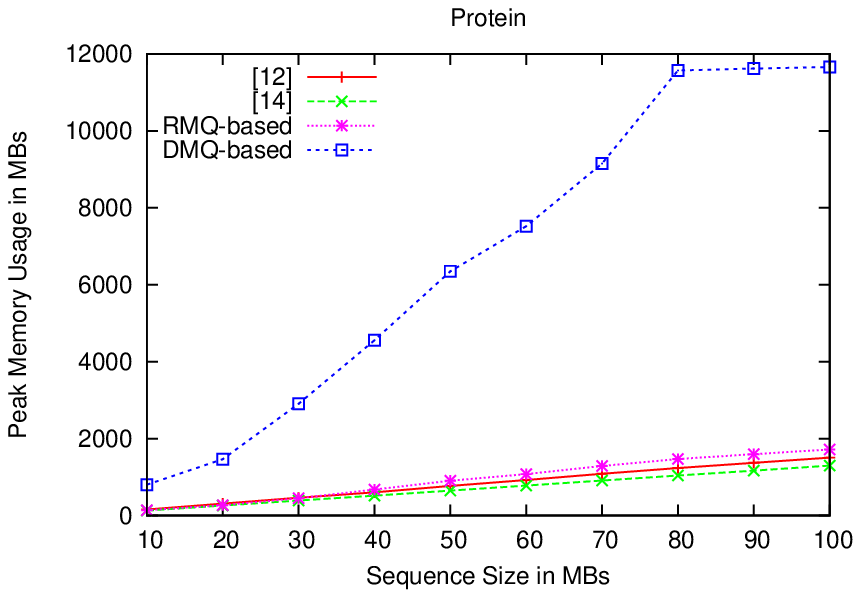} 
\end{tabular}
\caption{Peak memory usage of different proposals for
DNA and Protein strings of different sizes}
\label{fig:space}
\end{figure*}

%----------------------------------------------------------------------------

\begin{figure*}[t]
\begin{tabular}{cc}
\includegraphics[scale=0.995]{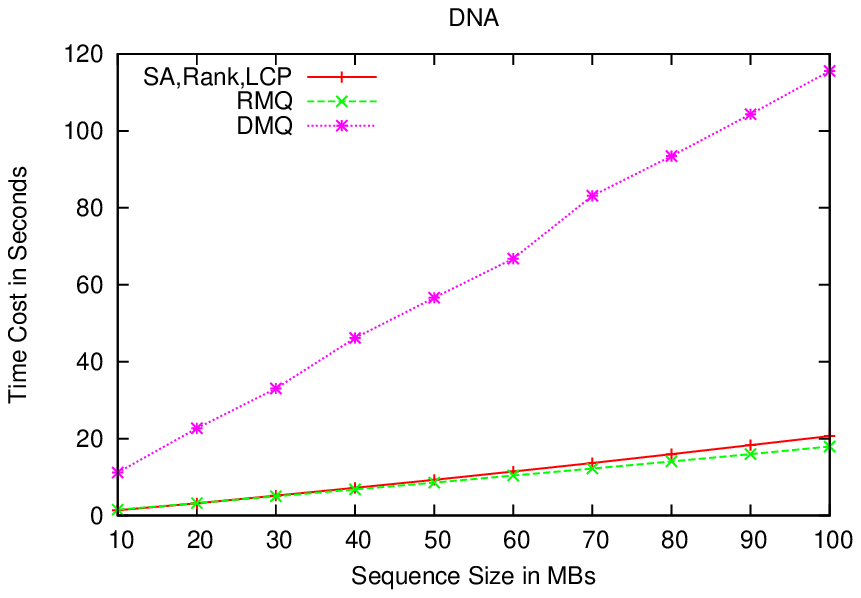} &
\includegraphics[scale=0.995]{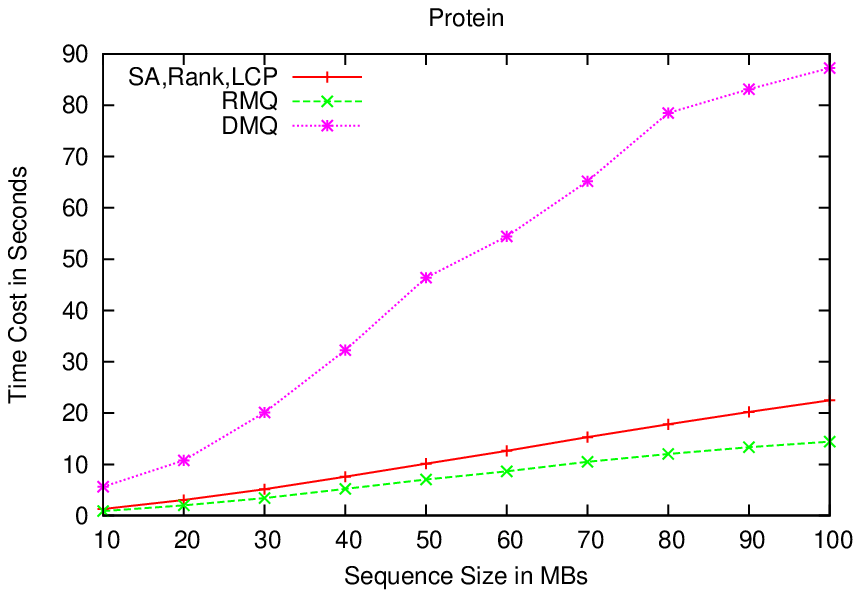} 
\end{tabular}
\caption{Indexing structure construction time for DNA and
  Protein strings of different sizes}
  \label{fig:time-index}
\end{figure*}

%----------------------------------------------------------------------------

\begin{figure*}[t]
\begin{tabular}{cc}
\includegraphics[scale=0.995]{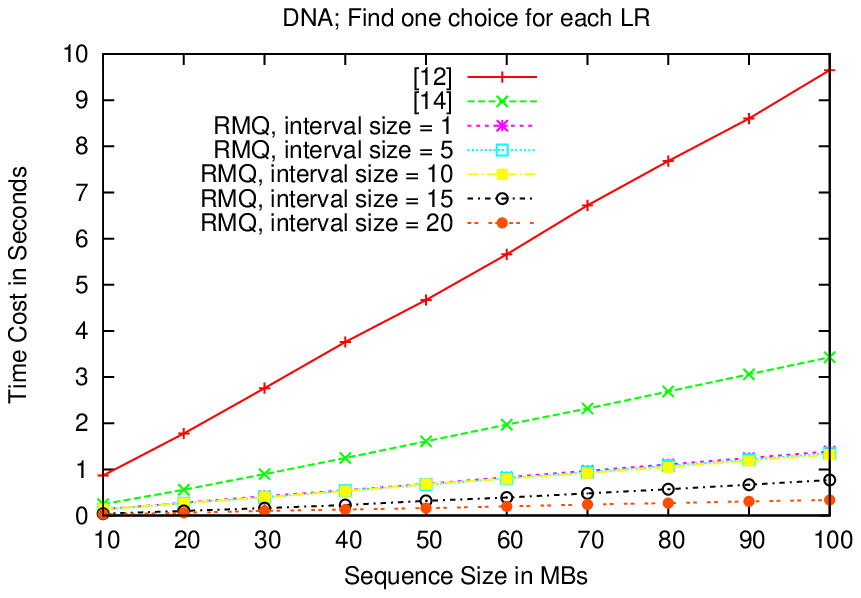} & 
\includegraphics[scale=0.995]{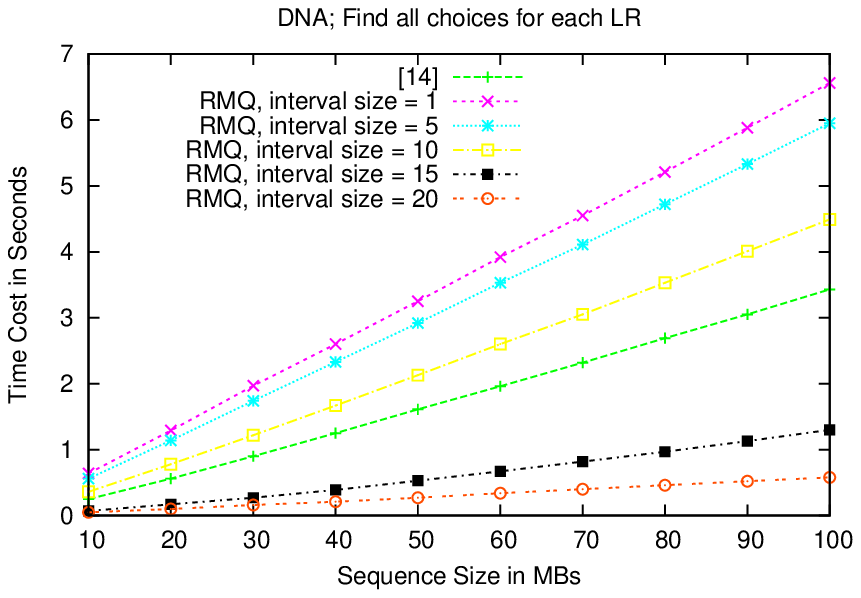} \\
\includegraphics[scale=0.995]{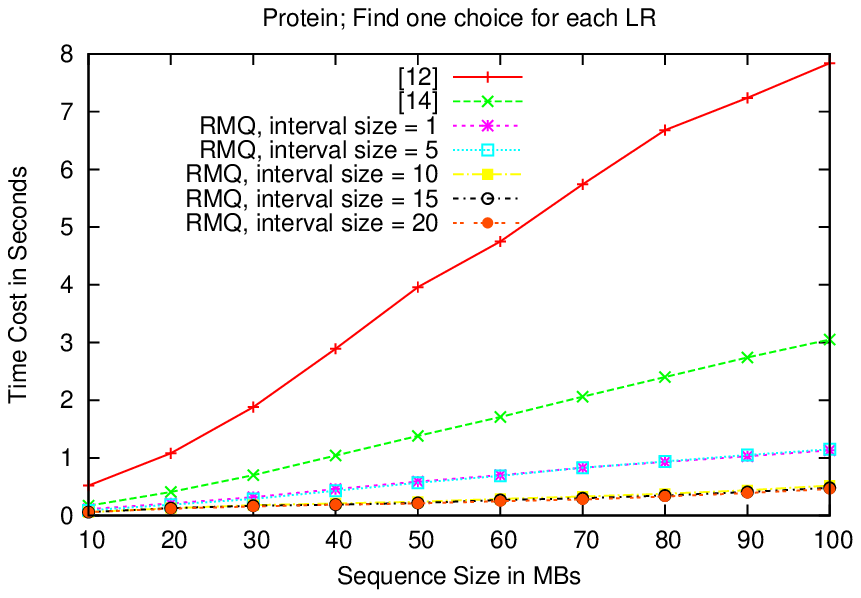} & 
\includegraphics[scale=0.995]{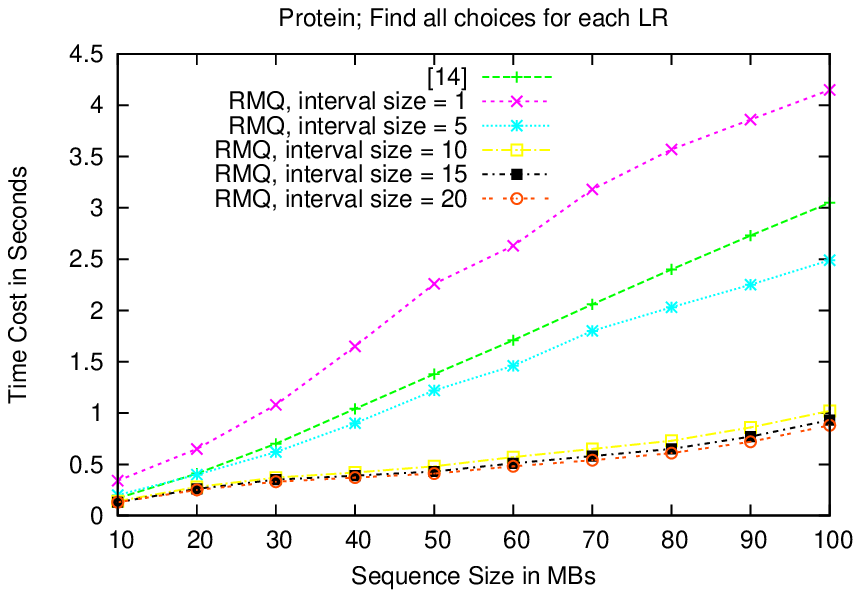} 
\end{tabular}
\caption{Query time of different proposals for DNA and Proteins
  strings of different sizes}
  \label{fig:time-query}
\end{figure*}

%----------------------------------------------------------------------------

\section{Implementation and Experiments}
\label{sec:exp}
We implement our proposals in {\tt C++}, using the library binary of the
implementation of the DMQ and RMQ structures
from~\cite{HPT-spire2014}. Our implementation is generic in that it
does not assume the alphabet size of the underlying string, and thus
supports LR queries over different types of strings.

We compare the performance of our
proposals with the prior works including the optimal $O(n)$ time and
space solution from~\cite{IKX-repeat-CORR2015}
and the suboptimal sequential algorithm presented
in~\cite{TX-GPU-DASFAA2015}.
Note that all prior works can only answer point queries.
All programs involved in the experiments use the same
\texttt{libdivsufsort}\footnote{{\tt https://code.google.com/p/libdivsufsort}.}
library for the suffix array construction, and are compiled by {\tt
  gcc 4.7.2} with {\tt -O3} option.

%\smallskip 
%\noindent
%{\bf Experimental setup.} 
We conduct our experiments on a GNU/Linux machine with kernel version
3.2.51-1.  The computer is equipped with an Intel Xeon 2.40GHz E5-2609
CPU with 10MB Smart Cache and has 16GB RAM.  All experiments are
conducted on real-world datasets including the {\tt DNA} and {\tt
  Protein} strings, downloaded from the Pizza\&Chili
Corpus\footnote{{\tt http://pizzachili.dcc.uchile.cl/texts.html}}.
The datasets we use are the two $100$MB {\tt DNA} and {\tt Protein}
pure ASCII text files, each of which thus represents a string of
$100\times 1024 \times 1024 = 104,857,600$ characters. Any other
shorter strings involved in our experiments are prefixes of certain
lengths, cut from the $100$MB strings. 

\remove{

\smallskip 
\noindent
{\bf Measurement.}
In order to focus on the comparison of the algorithmic performance of
difference proposals, we do not save the output in all experiments. We
use the command \texttt{/usr/bin/time -f} ``\texttt{\%M}'' provided by
Linux to report the peak memory usage of a run, and use the system
clock time to monitor the time cost.

}

\remove{
\begin{figure}[t]
\begin{center}
\includegraphics[scale=0.995]{stat_lr_all.eps}
\end{center}
\caption{The total number of LRs that cover each of the $n-\delta+1$
  intervals in 100MB DNA and Protein strings, where $n=100\times
  1024 \times 1024$ is the string size and $\delta=1,5,\ldots,30$ is
  the interval size.}
  \label{fig:stat}
\end{figure}
}

\remove{

\begin{figure}[t]
\begin{center}
\includegraphics[scale=0.995]{time_query_rmq_interval_DNA_protein_all.eps}
\end{center}
\caption{The total number of LRs that cover each of the $n-\delta+1$
  intervals in 100MB DNA and Protein strings, where $n=100\times
  1024 \times 1024$ is the string size and $\delta=1,5,\ldots,30$ is
  the interval size.}
  \label{fig:rmq-query}
\end{figure}

}

%----------------------------------------------------------------------------

\subsection{Space}
\label{subsec:space}

Here, we measure the peak memory usage of different
proposals, using the Linux command \texttt{/usr/bin/time -f}
``\texttt{\%M}'' that captures the maximum resident set size of a
process during its lifetime.  We do not save the output in the RAM in
order to focus on the comparison of the memory usage of the
algorithmics.  It is also because practitioners often flush the
outputs directly to disk files for future reuse.

Figure~\ref{fig:space} shows the peak memory usage of different
proposals that process DNA and protein strings of different sizes.
It is worth noting that, by design, the memory usage of each proposal
is independent from the query type, such as finding one choice vs.\
all choices of an LR, point query vs.\ interval query. We have the
following main observations:

\noindent
-- All proposals show the linearity of their space usage 
over  string size.

\noindent
-- Our DMQ-based proposal uses much more memory space than other
  proposals. It is mainly caused by the high space demand from the DMQ
  structure.

\noindent
-- Our RMQ-based proposal uses nearly the same amount of memory space  as that
  of prior works, while significantly improving the usability of
  the technique by providing the functionality of interval queries.  
%Further reasons: linked list in [8] for optimality. No other data structure in [19]. 
%RMQ structure space cost in our proposal. 

\subsection{Time}
\label{subsec:time}

Figure~\ref{fig:time-index}
shows the construction time  of the indexing structures used by different proposals. 
Note that all proposals need to construct the suffix array, rank
array, and the lcp array of the given string, and our proposals
further use these auxiliary arrays to construct the DMQ and RMQ
structures for interval queries. 
%The index construction time is irrelevant to query for one choice vs.\ all choices of an LR. 
The following are the main observations: 

\noindent
-- The construction of the DMQ structure takes much more time than 
that of  the auxiliary arrays and the RMQ structure. 

\noindent
-- Both the auxiliary array and RMQ structure clearly show
the  linearity in the their construction time over string size.

\noindent
-- The construction of the RMQ structure takes less time than the
  construction of the auxiliary arrays, making our RMQ-based proposal
  practical while supporting 
  interval queries.

Figure~\ref{fig:time-query} shows the time cost of various types of query.
Our DMQ-based proposal is so slow in query response
that we do not include it in the figure.  For point queries, we plot
the total time cost for all the point queries over all $n$ string
positions, where $n$ is the string size. For interval queries with
interval size $\delta$, we plot the total time cost for all the
interval queries over all $n-\delta+1$ intervals of the string. Note
that only point queries are involved in the experiments with the
proposals from~\cite{IKX-repeat-CORR2015}
and~\cite{TX-GPU-DASFAA2015}, because they do not support interval
queries.  The two figures on the left show the case for finding only
one choice for each LR, whereas the two on the right show the case for
finding all choices for each LR.  Because the proposal
from~\cite{IKX-repeat-CORR2015} does not support the finding of all
choices, it is not included in the two figures on the right side.
The following are the main observations:

\noindent
-- All proposals show the clear linearity of the total query time
  cost, meaning the amortized $O(1)$ time cost for each query.

\noindent
-- In the setting of finding one choice for each LR (the two
  figures on the left of Figure~\ref{fig:time-query}), our RMQ-based
  proposal is the fastest regarding the per-query response time,
  including both point query and interval query! Further, our RMQ-based
  proposal's interval query response becomes even faster,
  when interval size increases. That is because a longer
  interval is covered by fewer number of repeats, reducing the
  search space size for finding the LR covering the interval.

\noindent
-- In the setting of finding all choices for each LR (the two
  figures on the right side of Figure~\ref{fig:time-query}): 

\begin{itemize}
\item 
  For point query, our RMQ-based proposal is a little slower
  than~\cite{TX-GPU-DASFAA2015} due to the following reason.  On
  average, an LR point query returns more choices than an interval
  query.  Our technique needs to make a query to the index for finding
  every single choice, whereas the technique
  in~\cite{TX-GPU-DASFAA2015} only needs one extra ``walk'' for
  finding all choices for a particular LR point query.  Even though
  our technique is faster than~\cite{TX-GPU-DASFAA2015} for finding
  one choice (the two figures on the left), when a particular point
  query has many choice, our technique can become slower in finding all choices.

\item
  As interval size increases, our RMQ-based proposal becomes faster,
  because a longer interval on average has fewer choices for its LR,
  %(Figure~\ref{fig:stat}), 
  making our technique have fewer queries to its index. Our
  technique's interval query can be even faster than the point query
  by~\cite{TX-GPU-DASFAA2015} in finding all choices when interval
  size increases. For example, it is true,  when interval size becomes $\geq 15$ for DNA
  string (top-right figure) and $\geq 5$ for protein string
  (bottom-right figure).

\remove{
3. DNA string has more LR until interval size becomes large than 10,
so the point query and interval queries with interval size 5 and 10
are slower than the point query of [19]. For protein string, the \#LR
is smaller than those of the DNA strings, so only point query is
slower. Supported by [19]. Interval quires of the same interval size
for DNA data is slower than those of protein data, because more
choices for LR when interval size is less than 15. supported by figure 5. 
}

\end{itemize}

\section{Conclusion}
\label{sec:conclusion}
We generalized the longest repeat query on a string from point query
to interval query and proposed both time and space optimal solution
for interval queries. Our approach is different from prior work which
can only handle point queries. Using the insight
from~\cite{HPT-spire2014}, we proposed an indexing structure that
can be built on top of the string using time and space linear of the
string size, such that any future interval queries can be answered in
$O(1)$ time. 
We implemented our proposals without assuming the alphabet size of the string, 
making it useful for different types of strings. 
An interesting future work is to parallelize our proposal so as to take
advantage of the modern multi-core and multi-processor computing
platforms, such as the general-purpose graphics processing units.

\remove{
\section*{Acknowledgments}
The author would like to thank Professor Yufei Tao from Chinese
University of Hong Kong for sharing their manuscript on 
Shortest Unique Queries on Strings~\cite{HPT-spire2014},
which sparked the idea of the problem
and its solution presented in this paper.
}

\balance

\bibliographystyle{IEEEtran}
\bibliography{bibjsv,repeat,pm}

\end{document}

\newpage

\appendix

%\section*{Appendix}

\begin{table*}[h]
  \caption{Peak memory usage of different proposals for
DNA and Protein strings of different sizes}
\label{tab:space}
\medskip 
\begin{center}
  \begin{tabular}{r||r|r|r|r||r|r|r|r}
  String\ \, & \multicolumn{4}{c||}{Peak Memory Usage for DNA Strings}
  & \multicolumn{4}{c}{Peak Memory Usage for Protein Strings} \\
\cline{2-9}
\tallstrut Size (MB) & [8] & [19] & RMQ-based & DMQ-based & [8] & [19] & RMQ-based & DMQ-based\\
\hline 
\tallstrut
$10$ & $176$ & $131$ & $192$ & $1,515$ & $159$ & $131$ & $141$ & $803$\\
$20$ & $350$ & $261$ & $383$ & $3,020$ & $312$ & $261$ & $270$ & $1,468$\\
$30$ & $509$ & $391$ & $575$ & $4,477$ & $466$ & $391$ & $451$ & $2,907$\\
$40$ & $635$ & $521$ & $775$ & $6,039$ & $605$ & $521$ & $677$ & $4,558$\\
$50$ & $793$ & $651$ & $964$ & $7,392$ & $774$ & $651$ & $908$ & $6,350$\\
$60$ & $951$ & $781$ & $1,156$ & $8,865$ & $925$ & $781$ & $1,080$ & $7,523$\\
$70$ & $1,107$ & $911$ & $1,359$ & $11,957$ & $1,089$ & $911$ & $1,289$ & $9,150$\\
$80$ & $1,265$ & $1,041$ & $1,553$ & $12,076$ & $1,238$ & $1,041$ & $1,473$ & $11,570$\\
$90$ & $1,424$ & $1,171$ & $1,749$ & $13,182$ & $1,374$ & $1,171$ & $1,599$ & $11,623$\\
$100$ & $1,582$ & $1,301$ & $1,945$ & $14,573$ & $1,507$ & $1,301$ & $1,718$ & $11,661$\\ 
\hline
  \end{tabular}
\end{center}
\end{table*}

\begin{table*}[h]
  \centering
\caption{Indexing structure construction time (in seconds) for DNA and
  Protein strings of different sizes}
  \label{tab:time-index}
\medskip 
  \begin{tabular}{r||r|r|r||r|r|r}
String\,\, & \multicolumn{3}{c||}{Index construction time for DNA strings} & \multicolumn{3}{c}{Index construction time for Protein strings} \\
\cline{2-7}
\tallstrut size (MB) & SA, Rank, LCP & RMQ index & DMQ index & SA, Rank, LCP & RMQ index & DMQ index\\
\hline
\tallstrut
$10$ & $1.40$ & $1.53$ & $11.19$ & $1.28$ & $0.89$ & $5.61$\\
$20$ & $3.21$ & $3.26$ & $22.68$ & $3.06$ & $1.99$ & $10.75$\\
$30$ & $5.17$ & $5.03$ & $33.02$ & $5.15$ & $3.41$ & $20.10$\\
$40$ & $7.21$ & $6.80$ & $46.17$ & $7.59$ & $5.22$ & $32.27$\\
$50$ & $9.29$ & $8.58$ & $56.65$ & $10.12$ & $7.04$ & $46.36$\\
$60$ & $11.46$ & $10.41$ & $66.81$ & $12.63$ & $8.64$ & $54.42$\\
$70$ & $13.67$ & $12.22$ & $83.12$ & $15.30$ & $10.48$ & $65.20$\\
$80$ & $15.97$ & $14.09$ & $93.48$ & $17.80$ & $12.02$ & $78.48$\\
$90$ & $18.32$ & $15.99$ & $104.32$ & $20.22$ & $13.34$ & $83.13$\\
$100$ & $20.66$ & $17.95$ & $115.56$ & $22.48$ & $14.43$ & $87.25$\\
\hline
  \end{tabular}
\end{table*}

\end{document}

- press experimental results in abstract and intro

- design experiments: 

> for one interval and any interval 
> compare with point query work and the two indexes.

\remove{
\begin{figure}[t]
\begin{center}
\begin{minipage}{.4\textwidth}
\hspace*{-30mm}
\centering
\def\0{\phantom{0}}
{\footnotesize
\begin{tabular}{c|c|c|c|c}
\hline 
 &  $\llr_i$ &  &  \, corresponding  \, & \\
$\0i\0$ & $(start, end)$  & \,$|\llr_i|$\, & \,2d dot: $(x,y,w )$ & \,useful ?\\
\hline
\hline
%\strut&&&&\\
$\01$ & $\,(1,4)\,$ & $4$  & $(1,4,4 )$ & {\tt yes}\\
$\02$ & $\,(2,4)\,$  & $3$  & -- & {\tt \0no} \\
$\03$ & $\,(3,4 )\,$ & $2$  & --  & {\tt \0no}\\
$\04$ & $\,(4,4 )\,$ & $1$  &--  & {\tt \0no}\\
$\05$ & $\,(5,8)\,$ & $4$  &$(5,8,4 )$ & {\tt yes}\\
$\06$ & $\,(6,9)\,$ & $4$  &$(6,9,4)$ & {\tt yes}\\
$\07$ & $\,(7,9)\,$ & $3$  & -- & {\tt \0no}\\
$\08$ & $\,(8,9)\,$ & $2$  & -- & {\tt \0no}\\
$\09$ & $\0(9,12)$  & $4$  &$(9,12,4)$ & {\tt yes}\\
$10$ & $(10,12)$  & $3$  & -- & {\tt \0no}\\
$11$ & $(11,12 )$ & $2$  & -- & {\tt \0no}\\
$12$ & $(12,12 )$ & $1$ & -- & {\tt \0no}\\
$13$ & $(13,13 )$ & $1$ &$(13,13,1)$ & {\tt yes}\\
$14$ & $(14,14)$ & $1$ &$(14,14,1)$ & {\tt yes}\\
$15$ & $(15,15 )$ & $1$  & $(15,15,1 )$ & {\tt yes}\\
\hline
\end{tabular}
}
\end{minipage}%
\begin{minipage}{.4\textwidth}
\vspace*{4mm}
\hspace*{-5mm}
% Generated with LaTeXDraw 2.0.8
% Fri Sep 19 14:34:27 PDT 2014
% \usepackage[usenames,dvipsnames]{pstricks}
% \usepackage{epsfig}
% \usepackage{pst-grad} % For gradients
% \usepackage{pst-plot} % For axes
\scalebox{0.55} % Change this value to rescale the drawing.
{
\begin{pspicture}(0,-6.3997917)(16.165833,6.4397917)
\rput(1.0058334,-5.7139583){\psaxes[linewidth=0.04,ticksize=0.10583333cm,Oy=3,showorigin=false](0,0)(0,0)(15,12)}
\psdots[dotsize=0.2](2.0458333,-4.753958)
\usefont{T1}{ptm}{m}{n}
\rput(3.1438022,-4.7789583){\large $(1, 4, 4)$}
\psdots[dotsize=0.2](6.065833,-0.7139583)
\usefont{T1}{ptm}{m}{n}
\rput(4.783802,-0.6589583){\large $(5, 8, 4)$}
\psdots[dotsize=0.2](7.065833,0.26604167)
\usefont{T1}{ptm}{m}{n}
\rput(5.863802,0.24104166){\large $(6,9,4)$}
\psdots[dotsize=0.2](10.045834,3.2460418)
\usefont{T1}{ptm}{m}{n}
\rput(8.843802,3.2410417){\large (9,12,4)}
\psdots[dotsize=0.2](14.065833,4.246042)
\psdots[dotsize=0.2](15.025833,5.226042)
\psdots[dotsize=0.2](16.045834,6.246042)
\usefont{T1}{ptm}{m}{n}
\rput(12.483802,4.2410417){\large $(13,13,1)$}
\usefont{T1}{ptm}{m}{n}
\rput(13.643802,5.2210417){\large $(14,14,1)$}
\usefont{T1}{ptm}{m}{n}
\rput(14.543802,6.2010417){\large $(15,15,1)$}
\end{pspicture} 
}

\end{minipage}

\end{center}
  \caption{The geometric perspective of the useful LLRs of the example string $S={\tt missmississippi}$ }
  \label{fig:llr}

\end{figure}

}

\remove{

\begin{figure}[t]
\begin{center}
\begin{minipage}{.4\textwidth}
\hspace*{-10mm}
\centering
\def\0{\phantom{0}}
{\footnotesize
\begin{tabular}{c|c|c|c|c}
\hline 
 &  $\llr_i$ &  &  \, corresponding  \, & \\
$\0i\0$ & $(start, end)$  & \,$|\llr_i|$\, & \,2d dot: $((x,y),w )$ & \,useful ?\\
\hline
\hline
%\strut&&&&\\
$\01$ & \0not existing\0 & --  & -- & -- \\
$\02$ & $(2,5)$  & $4$  & $((2,5),4)$ & {\tt yes} \\
$\03$ & $(3,5 )$ & $3$  & --  & {\tt no}\\
$\04$ & $(4,5 )$ & $2$  &--  & {\tt no}\\
$\05$ & $(5,8)$ & $4$  &$((5,8),4 )$ & {\tt yes}\\
$\06$ & $(6,8)$ & $3$  &-- & {\tt no}\\
$\07$ & $(7,8)$ & $2$  & -- & {\tt no}\\
$\08$ & $(8,8)$ & $1$  & -- & {\tt no}\\
$\09$ & $(9,9)$  & $1$  &$((9,9),1)$ & {\tt yes}\\
$10$ & $(10,10)$  & $1$  & $((10,10),1)$ & {\tt yes}\\
$11$ & $(11,11 )$ & $1$  & $((11,11),1)$ & {\tt yes}\\
\hline
\end{tabular}
}
\end{minipage}%
\begin{minipage}{.4\textwidth}
\vspace*{10mm}
\hspace*{5mm}
% Generated with LaTeXDraw 2.0.8
% Sat Sep 20 11:11:52 PDT 2014
% \usepackage[usenames,dvipsnames]{pstricks}
% \usepackage{epsfig}
% \usepackage{pst-grad} % For gradients
% \usepackage{pst-plot} % For axes
\scalebox{1} % Change this value to rescale the drawing.
{
\begin{pspicture}(0,-2.4229167)(8.187728,2.4329166)
\definecolor{color4009}{rgb}{0.2,0.2,1.0}
\rput(0.74583334,-1.7170833){\psaxes[linewidth=0.03,ticksize=0.10583333cm,dx=1.0cm,dy=1.0cm,Dx=2,Dy=2,Ox=1,Oy=4](0,0)(0,0)(5,4)}
\psdots[dotsize=0.14](1.2258333,-1.2170833)
\usefont{T1}{ptm}{m}{n}
\rput(2.0972884,-1.1720834){$((2,5),4)$}
\psdots[dotsize=0.14](2.7658334,0.24291666)
\usefont{T1}{ptm}{m}{n}
\rput(1.9772884,0.26791668){$((5,8),4)$}
\psdots[dotsize=0.14](4.7658334,0.68291664)
\usefont{T1}{ptm}{m}{n}
\rput(4.8772883,0.40791667){$((9,9),1)$}
\psdots[dotsize=0.14](5.2458334,1.2829167)
\usefont{T1}{ptm}{m}{n}
\rput(6.2572885,1.3079166){$((10,10),1)$}
\psdots[dotsize=0.14](5.7658334,1.8029166)
\usefont{T1}{ptm}{m}{n}
\rput(5.7772884,2.1079166){$((11,11),1)$}
\psline[linewidth=0.02,linecolor=red,linestyle=dashed,dash=0.16cm 0.16cm](0.9058333,0.78291667)(3.7258334,0.7629167)(3.7258334,2.4229167)
\psline[linewidth=0.02,linecolor=color4009,linestyle=dashed,dash=0.16cm 0.16cm](0.86583334,-0.19708334)(3.2058334,-0.17708333)(3.2058334,2.3629167)
%\psdots[dotsize=0.12,linecolor=red](3.7458334,0.7629167)
%\psdots[dotsize=0.12,linecolor=blue](3.2258334,-0.17708333)
\end{pspicture} 
}
\end{minipage}

\end{center}
\caption{The geometric perspective of the useful LLRs of an example
  string $S={\tt mississippi}$ }
  \label{fig:llr}

\end{figure}

}

\section{Research Notes}

\begin{lemma}
\label{lem:is-llr}
The longest repeat covering a given range of the string positions must be a
left-bounded longest repeat, if such a longest repeat exists.
\end{lemma}

\begin{lemma}
\label{lem:range-search}
The problem of finding the LR covering a given range can be reduced to 
the 2-d orthogonal range search problem. 
\end{lemma}

\begin{corollary}
The LR covering a given range can be found using $O(\log n)$ time and $O(n)$ space. 
\end{corollary}

\begin{openproblem}
 How to find all LRs covering a given range using the 2-d orthogonal search ?  
What is the time complexity ? 
\end{openproblem}

\begin{openproblem}
  How to solve the problem without using the 2-d orthogonal range
  search techniques, such that the time complexity for answering one
  query is $O(1)$ for one LR and $O(occ)$ for all LRs where $occ$ is
  the number LRs covering the given range ?
\end{openproblem}

\begin{observation}
  We only need to process all the LLRs that are not a substring of any
  other LLR. That is, all the LLRs of interest do not share either
  their left end-point or right-end point. This may potentially make it possible to use
  an RMQ structure to achieve $O(1)$-time query response. 
\end{observation}

Open problems: 
1. Longest repeat finding over substrings. 2. The duality between the
longest substring and shortest unique substring queries.

one longest repeat covering one position 
all longest repeats covering one position 
one longest repeat covering every position 
all longest repeats covering every position 
pointer system based 
rank/select query based

mention the duality is not obvious and is not known yet, so the
solution for SUS cannot be used for LR finding. 

propose suffix tree based solution

pointer table based implementation

compare three methods' performance

show using SUS algorithm does not immediately produce LRs.

---- to be changed ----

add back gpu paper and walcom paper after they are accepted: intro, related work, contribution

cite sus spire paper after accepted

\bigskip

\noindent
{\color{red}
Add mismatch, in-place, ..., to the journal version
}